\newtheorem{proposition}{Proposition}
\newtheorem{corollary}{Corollary}
\newtheorem{lemma}{Lemma}
\newcommand{\N}{\ensuremath{\mathbb{N}}}
\newcommand{\Z}{\ensuremath{\mathbb{Z}}}
\newcommand{\R}{\ensuremath{\mathbb{R}}}
\newcommand{\C}{\ensuremath{\mathbb{C}}}
\newcommand{\E}{\ensuremath{\mathbb{E}}}
\renewcommand{\P}{\ensuremath{\mathbb{P}}}
\newcommand{\ind}[1]	{\ensuremath{\mathbbm{1}_{\left\{ {#1} \right\}}}}
\newcommand{\abs}[1]	{\ensuremath{\left| {#1} \right|}}
\newcommand{\atan}[1]	{\ensuremath{\mathop{\mathrm{ArcTan}} \left( {#1} \right)}}
\newcommand{\var}[1]	{\ensuremath{\mathop{\mathrm{var}}_{\; #1}}}
\newcommand{\I}		{\ensuremath{\mathbb{I}}}
\newcommand{\diff}		{\ensuremath{\mathop{}\mathopen{}\mathrm{d}}}
\date{\today}
\keywords{Sojourn time; Processor Sharing, etc.}
\begin{document}
\title[Sojourn in an $M/M/1$ PS queue]{On the sojourn of an arbitrary customer in an $M/M/1$ Processor Sharing Queue}

\author{Fabrice Guillemin and Veronica Quintuna}
\address{Orange Labs2, CNC/NCA, Avenue Pierre Marzin, 22300 Lannion, France}

\keywords{Processor Sharing, $M/M/1$ queue, sojourn time, asymptotic analysis}

\begin{abstract}
In this paper, we consider the number of both arrivals and departures seen by  a tagged customer while in service in  a classical $M/M/1$ processor sharing queue. By exploiting the underlying orthogonal structure of this queuing system revealed in an earlier study, we  compute the distributions of these two quantities and prove that they are equal in distribution. We moreover derive the asymptotic behavior of this common distribution. The knowledge of the number of departures seen by a tagged customer allows us to test the validity of an approximation, which consists of assuming that the tagged customer is randomly served among those customers in the residual busy period of the queue following the arrival of the tagged customer. A numerical evidence shows that this approximation is reasonable for moderate values of the number of departures, given that the asymptotic behaviors of the distributions are very different even if the exponential decay rates are equal.
\end{abstract}

\maketitle

\section{Introduction}

The processor sharing (PS) policy  is a well known service discipline, which was introduced in the 1960's by Kleinrock~\cite{Klein1} in the performance evaluation of computer networks (see also~\cite{Klein0}). With this service discipline, jobs in the queue are served in an egalitarian way. Thus, if there are $n$ jobs in the queue, each job receives the $1/n$ fraction  of the server capacity. The $M/G/1$-PS queue has been studied in the queuing literature by several authors, see for instance~\cite{Ott,Schassberger,Yashkov}, who computed the Laplace transform of the sojourn time of a tagged customer conditioned on the service time. For the specific case of the $M/M/1$ queue, it is possible to obtain an explicit expression of the sojourn time distribution of an arbitrary customer~\cite{Coffman,Morrison}. In~\cite{FGBJ}, an orthogonal structure involving Pollaczek polynomials was introduced to solve the infinite differential system satisfied by the vector composed of the distributions of the sojourn time of a customer conditioned on the number of customers in the system upon arrival. The $M/G/1$-PS system has then been further extended to account of permanent customers~\cite{Boxma,FGFS,Yashkov2}.

From a practical point view, the processor sharing discipline has gained renewed interest in the study of resource sharing in the Internet. As a matter of fact, the PS discipline can be used to model how TCP connections share the bandwidth of a bottleneck in a packet network~\cite{JRLM}. More recently, in the context of cloud platforms, the processor sharing can reflect how the capacity of a multi-core platform is shared among several tenants~\cite{veronica2}. 

In this paper, instead of directly computing the distribution of the sojourn time of a customer, we study the number of both arrivals and departures seen by a tagged customer while it is in service, denoted by $\alpha$ and $\delta$, respectively. We explicitly compute the distributions of these two quantities in the stationary regime. We notably establish that $\alpha \stackrel{d}{=} \delta$ (equality in distribution). As a byproduct, we can recover the distribution of the sojourn time of the tagged customer in the queue. To compute the distribution of $\delta$, we use the orthogonal structure underlying the $M/M/1$-PS queue, namely Pollaczek polynomials and their associated orthogonality measure~\cite{FGBJ}.

The knowledge of the distribution of $\delta$ allows us to easily test the accuracy of the following approximation: Since at each departure, all customers have equal chance of leaving the system, because of the memory-less property of the exponential distribution, we can suppose that when a tagged customer enters the queue, this customer is randomly served among those customers served in the residual busy period (i.e., the busy period starting at the arrival epoch of the tagged customer and ending  when the queue empties). Under this assumption, we compute the distribution of the number $\widetilde{b}$ of customers leaving the system before the tagged customer is randomly picked. Even if this approximation seems to be rough at first glance, numerical results show that $\widetilde{b}$ is a reasonable upper bound for $\delta$. The motivation for this approximation is to develop a method of approximating the sojourn time of a batch in an $M^{[X]}/M/1$-PS queue. While the Laplace transform of the sojourn time of a job has been computed in~\cite{Veronica1} when batches are geometrically distributed, results for the sojourn time of an entire batch is much more challenging.

The organization of this paper is as follows: In Section~\ref{model}, we introduce the model and the various random variables. In particular, these random variables are related to an absorbed Markov discrete time, whose transition matrix introduces a selfadjoint operator in an ad-hoc Hilbert space. The spectral properties of this operator are studied in Section~\ref{spectral}  and the distributions of the random variables are computed in Section~\ref{distributions}.  Numerical results are presented in Section~\ref{numerical}. Further research directions are discussed in Section~\ref{conclusion}.

\section{Model description}
\label{model}

We consider a classical $M/M/1$ Processor sharing queue with arrival rate $\rho$ and unit service rate. We assume that a tagged customer arrives at the queue at time $t=0$ and that there are $n$ customers in the queue upon arrival of this tagged customer. We introduce the discrete-time process $(N_k)$ describing the number of customers in the queue other than the tagged one at the departures or arrivals of customers.  When the tagged customer completes its service, the process $N_k$ is absorbed in some state, denoted by $-1$. The index $k$ is thus the number of departures from the queue or arrivals at this queue before the process $(N_k)$ gets absorbed. We set $N_0=n$ since we assume that there are $n$ customers in the queue upon arrival of the tagged customer.

The state space of the process $(N_k)$ is $\{-1,0, 1,2, \ldots\}$ and $(N_k)$ is a discrete-time Markov chain with transition matrix $\mathcal{A}$ given by
$$
\mathcal{A} = \begin{pmatrix}
1 & 0 & 0 & 0 & 0 & \ldots \\
\frac{1}{1+\rho} & 0 & \frac{\rho}{1+\rho} & 0 & 0 & \ldots \\
\frac{1}{2(1+\rho)} & \frac{1}{2(1+\rho)} & 0 & \frac{\rho}{1+\rho} & 0 & \ldots \\
\frac{1}{3(1+\rho)} & 0 & \frac{2}{3(1+\rho)}  & 0 & \frac{\rho}{1+\rho}  & \ldots \\
\vdots & \vdots &  & &  & \ldots \\
\end{pmatrix}
$$
The non-null coefficients of the matrix $\mathcal{A}$ are given by $\mathcal{A}_{-1,-1}=1$ and for $n \geq 0$
\begin{eqnarray*}
\mathcal{A}_{n,-1} &=& \frac{1}{(n+1)(1+\rho)},\\
\mathcal{A}_{n,n-1} &=& \frac{n}{(n+1)(1+\rho)},\\
\mathcal{A}_{n,n+1} &=& \frac{\rho}{1+\rho}.
\end{eqnarray*}

Let us now consider the sub-matrix $A$ of $\mathcal{A}$ obtained by deleting the first row and the first column of matrix $\mathcal{A}$. The non-null coefficients of matrix $A$ are given for for $n \geq 0$ by
$$
a_{n,n+1}= \frac{\rho}{1+\rho} \mbox{ and } a_{n,n-1}= \frac{n}{(n+1)(1+\rho)},
$$
with the convention $a_{-1,0}=0$. This matrix is tridiagonal and sub-stochastic, and gives the transition probabilities of the Markov chain $(N_k)$ before absorption.

Let $e_n$ be the column vector with all entries equal to 0 except the $n$-th one equal to 1. Then, for $m \geq 0$,
$$
\P(N_k = m~|~ N_0=n) = {}^te_nA^{k} e_m
$$
where ${}^te_n$ is the row vector equal to the transpose of the column vector $e_n$. The probability that the tagged customer leaves the system at stage $k\geq 1$ and leaves $m$ customers in the queue is equal to
$$
\frac{1}{(m+1)(1+\rho)} {}^te_nA^{k-1} e_m
$$

Let $\nu$ denote the number of customers left in the system by the tagged customer upon service completion. We have
\begin{multline}
\label{eqnu}
\P(\nu=m~|~N_0=n) = \\ \frac{1}{(m+1)(1+\rho)}\sum_{k=1}^\infty  {}^te_nA^{k-1} e_m = \frac{1}{(m+1)(1+\rho)} {}^te_n(\I-A)^{-1} e_m .
\end{multline}
Similarly, if we denote by $\kappa$ the time at which the tagged customer leaves the system, we have for $k \geq 1$
$$
\P(\kappa = k ~|~N_0=n) = \sum_{m=0}^\infty  \frac{1}{(m+1)(1+\rho)} {}^te_nA^{k-1} e_m .
$$
Finally, for $k \geq 1$ and $m \geq 0$
$$
\P(\kappa = k, \nu= m ~|~N_0=n) =\frac{1}{(m+1)(1+\rho)} {}^te_nA^{k-1} e_m .
$$

Let $\alpha$ and $\delta$ respectively denote the number of arrivals (excluding the tagged customer) and departures, while  the tagged customer is in the system. Assume that there {are} $n$ customers in the system at the arrival time of the tagged customer and $m$ customers in the system upon service completion of the tagged customer. We have
$$
\alpha + \delta = \kappa-1 \mbox{  and  } \alpha - \delta = \nu -n, 
$$
so that 
$$
\alpha = \frac{\kappa+\nu-n-1}{2} \mbox{  and   } \delta = \frac{\kappa - \nu +n-1}{2}.
$$

In view of Equation~\eqref{eqnu}, we see that for characterizing the distribution of the various random variables introduced above, we have to compute the resolvent of the infinite matrix $A$, namely $(z\I-P)^{-1}$ as well as the powers of matrix $P$. For this purpose, we prove in the next section that this matrix induces a selfadjoint operator in some ad-hoc Hilbert space.

\section{Spectral properties of matrix $A$}
\label{spectral}

\subsection{Selfadjointness properties}

We introduce the same real Hilbert space as in~\cite{FGBJ}. Let 
$$
H=\left\{f\in \R^\N : \sum_{n=0}^\infty f_n^2 \pi_n<\infty \right\},
$$
where $\pi_n = (n+1)\rho^n$. The Hilbert space $H$ is equipped with the scalar product
$$
(f,g) = \sum_{n=0}^\infty f_n g_n \pi_n
$$
and the norm
$$
\| f\| = \sqrt{\sum_{n=0}^\infty f_n^2 \pi_n}.
$$

The infinite matrix $A$ induces in $H$ an operator that we also denote by $A$. By using the same arguments as in~\cite{FGBJ}, we can easily prove the following lemma, where we use the norm of the operator $A$ defined by
$$
\|A\| = \sup_{f\in H : \|f\|<1}|(Af,f)|;
$$
by definition, the operator $A$ is bounded if $\|A\|<\infty$.

\begin{lemma}
The operator $A$ is symmetric and bounded in $H$ and hence self-adjoint.
\end{lemma}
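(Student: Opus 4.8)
The plan is to prove three things about the operator $A$ acting on $H$: that it is symmetric (i.e. $(Af,g)=(f,Ag)$ for $f,g$ in a suitable dense subset), that it is bounded, and then to invoke the standard fact that a bounded symmetric operator defined on all of $H$ is self-adjoint. Let me sketch each piece.

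For symmetry, I would first work on the dense subspace of finitely supported sequences (those $f$ with $f_n=0$ for all large $n$), where all sums are finite and Fubini-type interchanges are trivially justified. On this subspace the claim reduces to a coefficient identity: writing $(Af,g)=\sum_{n,m} a_{n,m} f_m g_n \pi_n$, symmetry is equivalent to the detailed-balance relation $a_{n,m}\pi_n = a_{m,n}\pi_m$ for all $n,m\ge 0$. Since $A$ is tridiagonal, I only need to check this for $m=n+1$, i.e. that $a_{n,n+1}\,\pi_n = a_{n+1,n}\,\pi_{n+1}$. Plugging in the given coefficients, the left side is $\frac{\rho}{1+\rho}(n+1)\rho^n$ and the right side is $\frac{n+1}{(n+2)(1+\rho)}\,(n+2)\rho^{n+1}$; both equal $\frac{(n+1)\rho^{n+1}}{1+\rho}$, so the identity holds. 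This confirms that $\pi_n=(n+1)\rho^n$ is exactly the reversibility weight for which $A$ is symmetric.

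For boundedness, I would estimate $\|A\|=\sup_{\|f\|<1}|(Af,f)|$ directly. Using the tridiagonal structure and the detailed-balance relation, $(Af,f)=\sum_{n\ge 0} 2\,a_{n,n+1}\pi_n\, f_n f_{n+1}$. I would bound each cross term by $2a_{n,n+1}\pi_n|f_n||f_{n+1}|\le a_{n,n+1}\pi_n(f_n^2+f_{n+1}^2)$ via the AM-GM inequality, then regroup the sum by the index on which $f^2$ sits and recognize the resulting coefficients as bounded multiples of $\pi_n$ (using $a_{n,n+1}\pi_n=\frac{\rho}{1+\rho}\pi_n$ and $a_{n,n+1}\pi_n=a_{n+1,n}\pi_{n+1}$ again). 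This yields $\|A\|\le C\|f\|^2$ for an explicit constant $C$ depending only on $\rho$, giving boundedness; one expects $C$ to come out close to $\frac{2\sqrt{\rho}}{1+\rho}$, but for this lemma any finite bound suffices.

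The main obstacle is not any single estimate but the bookkeeping of the weights $\pi_n$ and the careful passage from the dense finitely-supported subspace to all of $H$. Once boundedness is established on the dense set, $A$ extends uniquely to a bounded operator on all of $H$, and symmetry extends by continuity of the inner product. The final step is then automatic: a symmetric operator that is everywhere defined and bounded has a domain equal to $H$ and coincides with its adjoint, hence is self-adjoint. Since the paper explicitly says this follows "by using the same arguments as in~\cite{FGBJ}," I would present the detailed-balance identity and the AM-GM bound as the two computational cores and refer to the cited work for the routine functional-analytic closure.
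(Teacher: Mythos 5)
Your proof is correct, and the skeleton (symmetry via the reversibility of the weights $\pi_n$, then a quadratic-form bound, then the standard bounded-symmetric-implies-self-adjoint fact) matches the paper; your detailed-balance computation $a_{n,n+1}\pi_n = a_{n+1,n}\pi_{n+1} = \frac{(n+1)\rho^{n+1}}{1+\rho}$ is exactly the identity the paper invokes (the paper writes it, with a slight index typo, as $a_{n,n+1}\pi_n = a_{n,n-1}\pi_{n+1}$). The genuine difference is in the boundedness step. The paper applies the Cauchy--Schwarz inequality to the folded form $(Af,f)=\frac{2}{1+\rho}\sum_{n\geq 0}(n+1)\rho^{n+1}f_nf_{n+1}$ and obtains the sharp constant $\|A\|\leq \frac{2\sqrt{\rho}}{1+\rho}<1$. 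Your AM--GM route gives only $\|A\|\leq \frac{\rho}{1+\rho}\|f\|^2/\|f\|^2 + \frac{1}{1+\rho}\|f\|^2/\|f\|^2 = 1$ (the second regrouped sum is $\sum_{m\geq 1}\frac{m}{m+1}\cdot\frac{\pi_m\rho^0}{1+\rho}f_m^2\cdot\frac{1}{\rho^0}\leq\frac{1}{1+\rho}\|f\|^2$, and the bound $1$ cannot be improved by this termwise argument, as $f=e_M$ with $M\to\infty$ shows). This is enough for the lemma as stated, as you say, but it is worth knowing what the sharper constant buys: the paper uses $\|A\|\leq\frac{2\sqrt{\rho}}{1+\rho}$ immediately afterwards to locate the spectrum in $\left[-\frac{2\sqrt{\rho}}{1+\rho},\frac{2\sqrt{\rho}}{1+\rho}\right]$, and the strict inequality $\|A\|<1$ is what guarantees that $\I-A$ is boundedly invertible, which underpins the resolvent formula~\eqref{eqnu}; with only $\|A\|\leq 1$ none of this follows. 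Your extra scaffolding (finitely supported sequences, extension by continuity) is rigor the paper elides and is harmless, though once one has the Cauchy--Schwarz estimate it applies verbatim to every $f\in H$, so the density detour is not actually needed.
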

\begin{proof}
The symmetry of $A$ is straightforward since it is easily checked for every $f,g\in H$, $(Af,g) = (f,Ag)$ owing to the reversibility property $a_{n,n+1}\pi_n = a_{n,n-1}\pi_{n+1}$ for all $n \geq 0$.

For $f\in H$, we have by Schwarz inequality
\begin{eqnarray*}
\left| (Af,f) \right| &=&\frac{2}{1+\rho}\left|  \sum_{n=0}^\infty (n+1)\rho^{n+1}f_nf_{n+1} \right| \\
&\leq & \frac{2}{1+\rho} \sqrt{\sum_{n=0}^\infty (n+1)\rho^{n+1}f^2_n}\sqrt{ \sum_{n=0}^\infty    (n+1)\rho^{n+1}      f^2_{n+1} } \\ 
&\leq & \frac{2\sqrt{\rho}}{1+\rho} \|f\|^2.
\end{eqnarray*}
This implies that $\|A\|\leq \frac{2\sqrt{\rho}}{1+\rho} <1$.
\end{proof}

\subsection{Spectrum}

The spectrum $\sigma(A)$ of the operator $A$ is defined by
$$
\sigma(A) = \{z \in \R : (zI-A) \mbox{ is not invertible}\}.
$$
Since $\|A\|\leq \frac{2\sqrt{\rho}}{1+\rho} $, we know that $\sigma(A) \subset [-\frac{2\sqrt{\rho}}{1+\rho} ,\frac{2\sqrt{\rho}}{1+\rho} ]$. 

Let us consider some $f \in H$ such that $Af=x f$ for some real number $x$. By setting without loss of generality $f_{-1}=0$ and $f_0=1$, we have for $n \geq 0$
\begin{equation}
\label{recurQ}
(n+1) \rho f_{n+1} -(n+1) (1+\rho) x f_n + n f_{n-1}=0,
\end{equation}
which can be rewritten as 
$$
(n+1) (\sqrt{\rho})^{n+1} f_{n+1} -2(n+1)\frac{1+\rho}{2\sqrt{\rho}} x (\sqrt{\rho})^n f_n + n (\sqrt{\rho})^{n-1} f_{n-1}=0
$$

Let us introduce the Pollaczek polynomials $(P_n(x;a,b))$ defined for real $b$ and $a\geq |b|$ by the recursion: $P_{-1}(x;a,b)=0$, $P_0(x;a,b)=1$ and for $n \geq 0$
$$
(n+1) P_{n+1}(x;a,b)-((2n+1+a)x+b)P_n(x;a,b) + n P_{n-1}(x;a,b)=0.
$$
It is then easily checked that 
$$
f_n = \frac{1}{(\sqrt{\rho})^n}P_n\left( \frac{1+\rho}{2\sqrt{\rho}} x   ;1,0\right).
$$
In the following, we introduce the vectors $Q(x)$ such that the $n$-th component is
$$
Q_n(x) = \frac{1}{(\sqrt{\rho})^n}P_n\left( \frac{1+\rho}{2\sqrt{\rho}} x   ;1,0\right).
$$

For $x\in [-1,1]$, let $\theta \in [0,\pi]$ be such that $x = \cos\theta$ and define
$$
\tau= \frac{a \cos\theta +b}{2\sin \theta}.
$$
The Pollaczek polynomials $P_n(x;a,b)$ have the generating function defined for $x =\cos\theta$ by
$$
\sum_{n=0}^\infty P_n(x;a,b) z^n =(1-z e^{i\theta})^{-\frac{1}{2}+i \tau} (1-z e^{-i\theta})^{-\frac{1}{2}-i \tau} \stackrel{def}{=}\mathcal{P}(\theta,z);
$$
these polynomials are orthogonal with respect to the weight function supported by the interval $[-1,1]$ and given by
$$
w(x;a,b) = \frac{1}{2\cosh \pi\tau}e^{(2\theta -\pi)\tau},
$$
such that 
\begin{equation}
    \label{normcond}
\int_{-1}^1 P_n(x;a,b)P_m(x;a,b) w(x;a,b)dx = \frac{1}{2n+1+a}\delta_{m,n}
\end{equation}
where $\delta_{m,n}$ is the Kronecker symbol.

It follows that the polynomials $Q_n(x)$ are orthogonal with respect to the  measure $d\psi(x)$ given by
\begin{equation}
\label{defpsi}
\frac{d\psi}{dx} (x) = \frac{1+\rho}{\sqrt{\rho}} w\left( \frac{1+\rho}{2\sqrt{\rho}} x;1,0 \right) 
\end{equation}
or equivalently for $x=\frac{2\sqrt{\rho}}{1+\rho}\cos\theta$ with $\theta\in [0,\pi]$
$$
\frac{d\psi}{d\theta} =\frac{\sin \theta}{\cosh(\frac{\pi\cot\theta}{2})}\exp\left(\cot\theta\left(-\frac{\pi}{2}+\theta\right) \right),
$$
and satisfy
\begin{equation}
\label{orthorel}
\int_{-\frac{2\sqrt{\rho}}{1+\rho}}^{\frac{2\sqrt{\rho}}{1+\rho}}  Q_n(x) Q_m(x) d\psi(x) = \frac{1}{\pi_n}\delta_{m,n}.
\end{equation}
The polynomials $(Q_n(x))$ have the generating function
$$
\mathcal{Q}(x;z) = \sum_{n=0}^\infty Q_n(x) z^n =\left(1-\frac{z}{\sqrt{\rho}} e^{i\theta   }\right)^{-\frac{1}{2}+i \frac{\cot\theta}{2}} \left(1-\frac{z}{\sqrt{\rho}} e^{-i\theta}\right)^{-\frac{1}{2}-i \frac{\cot\theta}{2}} 
$$
for $x=\frac{2\sqrt{\rho}}{1+\rho}\cos\theta$ with $\theta\in [0,\pi]$. The generating function $\mathcal{Q}(x;z) $ satisfies the first order differential equation
$$
(z^2 -(1+\rho)x z +\rho)\frac{\partial \mathcal{Q}}{\partial z}+(z-(1+\rho)x) \mathcal{Q}=0.
$$

 It is worth noting that for real $z<1$
\begin{equation}
  \label{defPgen}  
  \mathcal{P}(\theta,z) = \frac{1}{\sqrt{1-2z\cos\theta+z^2}} 
  e^{\cot \theta \arctan\left( \frac{z \sin\theta}{1-z\cos\theta}\right)   }.
\end{equation}
In particular,
\begin{equation}
\label{Pgen0}
  \lim_{\theta \to 0}  \mathcal{P}(\theta,z) =  \frac{1}{1-z} e^{\frac{z}{1-z}} =  \mathcal{P}(0,z).
\end{equation}

With the above observations, we state the following lemma.

\begin{lemma}
The operator $A$ has the continuous spectrum $\left[-\frac{2\sqrt{\rho}}{1+\rho},\frac{2\sqrt{\rho}}{1+\rho}\right]$ and spectral measure $d\psi(x)$ defined by Equation~\eqref{defpsi}.
\end{lemma}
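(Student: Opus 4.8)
The plan is to realize $A$ as a bounded self-adjoint Jacobi (tridiagonal) operator and to diagonalize it through the orthogonal polynomials $Q_n$ already at hand, following the classical spectral theory of Jacobi matrices. Concretely, I would build an isometry from $H$ onto $L^2(d\psi)$ that carries $A$ to the multiplication operator $M_x\colon g(x)\mapsto x\,g(x)$. Once this unitary equivalence is established, the spectrum of $A$ coincides with the support of $d\psi$, and its continuous nature follows from the absolute continuity of $d\psi$.

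First I would define $U\colon H\to L^2(d\psi)$ by $(Uf)(x)=\sum_{n=0}^\infty f_n\,\pi_n\,Q_n(x)$. The orthogonality relation~\eqref{orthorel} gives immediately $\|Uf\|_{L^2(d\psi)}^2=\sum_{n,m} f_n f_m \pi_n\pi_m \int Q_nQ_m\,d\psi=\sum_n f_n^2\pi_n=\|f\|^2$, so $U$ is an isometry into $L^2(d\psi)$ (the exchange of sum and integral being legitimate on the dense set of finitely supported sequences, and then extended by continuity). Next I would verify the intertwining relation $U(Af)=x\,(Uf)$. Since the recursion~\eqref{recurQ} states exactly that the formal vector $Q(x)=(Q_n(x))_n$ satisfies $A\,Q(x)=x\,Q(x)$, the symmetry of $A$ established in Lemma~1 yields, first for finitely supported $f$ and then by continuity, $x\,(Uf)(x)=\bigl(f,xQ(x)\bigr)=\bigl(f,AQ(x)\bigr)=\bigl(Af,Q(x)\bigr)=\bigl(U(Af)\bigr)(x)$, where $(\cdot,\cdot)$ denotes the scalar product of $H$ taken componentwise in $n$ for each fixed $x$.

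It then remains to prove that $U$ is onto, which is the crux of the argument. The key structural fact is that $e_0$ is a \emph{cyclic} vector for $A$: because the super-diagonal entries $a_{n,n+1}=\rho/(1+\rho)$ are all nonzero, an easy induction shows $\mathrm{span}\{e_0,Ae_0,\dots,A^ne_0\}=\mathrm{span}\{e_0,\dots,e_n\}$, so the orbit of $e_0$ is dense in $H$. Since $Ue_0=\pi_0Q_0=1$ (the constant function) and $U(A^ke_0)=x^k\,(Ue_0)=x^k$ by the intertwining relation, the image $U(H)$ contains every monomial, hence every polynomial. Because $A$ is bounded by Lemma~1, the measure $d\psi$ is supported in the compact interval $\bigl[-\tfrac{2\sqrt\rho}{1+\rho},\tfrac{2\sqrt\rho}{1+\rho}\bigr]$, so polynomials are dense in $L^2(d\psi)$; as $U(H)$ is closed (being the image of an isometry on a complete space), $U$ is unitary.

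Finally I would read off the spectrum from the unitary equivalence $A\cong M_x$ on $L^2(d\psi)$. The spectrum of $M_x$ equals the support of $d\psi$, which is the whole interval $\bigl[-\tfrac{2\sqrt\rho}{1+\rho},\tfrac{2\sqrt\rho}{1+\rho}\bigr]$ since the density in~\eqref{defpsi} is strictly positive on the interior; and since $d\psi$ is absolutely continuous it has no atoms, so $M_x$, and hence $A$, has no eigenvalues and the spectrum is purely continuous. The main obstacle is precisely the surjectivity step: everything hinges on $e_0$ being cyclic together with the density of polynomials in $L^2(d\psi)$, and it is the boundedness from Lemma~1 (equivalently, the determinacy of the associated moment problem) that guarantees this density and rules out any spectral mass escaping the generalized eigenfunction expansion carried by the $Q_n$.
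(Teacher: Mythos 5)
Your proof is correct, and it actually goes further than the paper does: the paper states this lemma with no proof at all, presenting it as a direct consequence of ``the above observations'' (the norm bound $\|A\|\leq \frac{2\sqrt{\rho}}{1+\rho}$, the identification of solutions of the eigenvalue recursion~\eqref{recurQ} with the rescaled Pollaczek polynomials $Q_n$, and the orthogonality relation~\eqref{orthorel}). What you wrote is precisely the standard Jacobi-matrix diagonalization that turns those observations into a rigorous argument: the isometry $Uf=\sum_n f_n\pi_n Q_n$ into $L^2(d\psi)$, the intertwining $UA = M_x U$ (which for finitely supported $f$ is a finite-sum computation resting on the reversibility $a_{n,n+1}\pi_n=a_{n+1,n}\pi_{n+1}$), surjectivity via $U(A^k e_0)=x^k$ combined with density of polynomials in $L^2$ of a compactly supported measure, and finally $\sigma(A)=\sigma(M_x)=\mathrm{supp}\, d\psi$, with absence of point spectrum because $d\psi$ is absolutely continuous with a.e.\ positive density on the interval. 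Each of these steps is sound; the surjectivity step, which you correctly single out as the crux, is exactly what the paper leaves implicit and what rules out spectral mass escaping the expansion in the $Q_n$. One cosmetic slip: with the convention $(Af)_n=\sum_m a_{n,m}f_m$, so that $Ae_m=a_{m-1,m}e_{m-1}+a_{m+1,m}e_{m+1}$, the entries driving the cyclicity induction for the orbit of $e_0$ are the sub-diagonal ones $a_{m+1,m}=\frac{m+1}{(m+2)(1+\rho)}$ rather than the super-diagonal ones you cite; since both families are nonzero (and $A$ is self-adjoint in $H$, so cyclicity for either action is equivalent), the argument is unaffected. In short, your write-up supplies the justification the paper omits, using exactly the ingredients the paper assembles before stating the lemma.
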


\section{Computation of distributions}
\label{distributions}

Let us first consider the random variables $\nu$ and $\kappa$.

\begin{proposition}
The distribution of the random variable $\nu$ (i.e., the number of jobs in the system upon completion of the tagged job) conditionally on the number of jobs in the system upon arrival of the tagged job is given for $m \geq 0$ by
\begin{equation}
\label{numn}
\P(\nu =m~|~N_0=n) =  \frac{\rho^m}{1+\rho} \int_{-\frac{2\sqrt{\rho}}{1+\rho}}^{\frac{2\sqrt{\rho}}{1+\rho}} \frac{Q_m(x) Q_n(x)}{1-x}d\psi(x).
\end{equation}
Moreover,
\begin{equation}
\label{kappan}
\P(\kappa = k ~|~N_0=n) = \frac{1}{1+\rho} \int_{-\frac{2\sqrt{\rho}}{1+\rho}}^{\frac{2\sqrt{\rho}}{1+\rho}} x^{k-1} \mathcal{Q}(x;\rho)Q_n(x)d\psi(x)
\end{equation}
and
\begin{equation}
\label{kappanun}
\P(\kappa = k,\nu=m ~|~N_0=n) = \frac{\rho^m}{(1+\rho)} \int_{-\frac{2\sqrt{\rho}}{1+\rho}}^{\frac{2\sqrt{\rho}}{1+\rho}} x^{k-1} Q_m(x) Q_n(x)d\psi(x)
\end{equation}
\end{proposition}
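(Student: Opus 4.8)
The plan is to reduce all three identities to a single ``master formula'' expressing the matrix entries of an arbitrary function of $A$ as an integral against the spectral measure $d\psi$, and then to specialize that formula to $\phi(x)=1/(1-x)$, to $\phi(x)=x^{k-1}$, and to a summed version of the latter. The engine is the spectral decomposition recorded in the preceding lemma. Concretely, I would introduce the linear map $U:H\to L^2(d\psi)$ determined by $Ue_m=\pi_m Q_m$ and extended by linearity. The orthogonality relation~\eqref{orthorel} shows immediately that $U$ is an isometry, since $\pi_m\pi_{m'}\int Q_mQ_{m'}\,d\psi=\pi_m\delta_{m,m'}=(e_m,e_{m'})$; that $U$ is onto, hence unitary, is precisely the statement that $d\psi$ is the full spectral measure of $A$, which is the content of the spectrum lemma. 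The crucial algebraic point is that the vectors $Q(x)=(Q_n(x))_n$ are generalized eigenvectors of $A$: the recursion~\eqref{recurQ} is exactly the coordinate form of $AQ(x)=xQ(x)$, and, combined with the reversibility $a_{n,n+1}\pi_n=a_{n,n-1}\pi_{n+1}$, this yields $U(Ae_m)=x\,Ue_m$. Thus $U$ conjugates $A$ into multiplication by $x$.

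Once $UAU^{-1}$ is multiplication by $x$, the bounded Borel functional calculus gives $U\phi(A)U^{-1}=$ multiplication by $\phi$ for every $\phi$ bounded on $\sigma(A)=\left[-\frac{2\sqrt\rho}{1+\rho},\frac{2\sqrt\rho}{1+\rho}\right]$. Writing the matrix entry as an inner product, ${}^te_n\phi(A)e_m=\frac{1}{\pi_n}(e_n,\phi(A)e_m)$, and transporting it through $U$, I obtain the master formula
$$
{}^te_n\phi(A)e_m=\pi_m\int_{-\frac{2\sqrt\rho}{1+\rho}}^{\frac{2\sqrt\rho}{1+\rho}}\phi(x)\,Q_n(x)Q_m(x)\,d\psi(x),\qquad \pi_m=(m+1)\rho^m .
$$
Specializing is then routine. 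For $\nu$ I take $\phi(x)=1/(1-x)$, which is legitimate because $\|A\|\le\frac{2\sqrt\rho}{1+\rho}<1$ places $1$ outside the spectrum, so $(\I-A)^{-1}$ is bounded and $\phi$ is continuous on $\sigma(A)$; feeding this into~\eqref{eqnu} and cancelling the factor $(m+1)$ produces~\eqref{numn}. For the joint law I take $\phi(x)=x^{k-1}$ and insert the master formula into $\P(\kappa=k,\nu=m\mid N_0=n)=\frac{1}{(m+1)(1+\rho)}\,{}^te_nA^{k-1}e_m$, which gives~\eqref{kappanun} after the same cancellation.

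The only genuinely delicate step is~\eqref{kappan}. Here I would sum~\eqref{kappanun} over $m\ge0$ and interchange the summation with the integral to obtain $\frac{1}{1+\rho}\int x^{k-1}Q_n(x)\bigl(\sum_{m\ge0}\rho^m Q_m(x)\bigr)\,d\psi(x)$, recognizing the inner sum as the generating function $\mathcal Q(x;\rho)$. The hard part is justifying this interchange and the convergence of the series at $z=\rho$: the generating function $\mathcal Q(x;z)=\sum_m Q_m(x)z^m$ converges only for $|z|<\sqrt\rho$, and the evaluation point $z=\rho$ lies inside this disk precisely because the queue is stable, since $\rho<1$ forces $\rho<\sqrt\rho$. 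I would make the interchange rigorous by dominated convergence, bounding the partial sums uniformly on the compact spectrum through the explicit closed form~\eqref{defPgen} of the generating function together with its boundary behavior~\eqref{Pgen0}; the factor $x^{k-1}$ is bounded by $1$ there and $d\psi$ is a finite measure, so an integrable dominating function is available. This convergence bookkeeping, rather than any new structural idea, is where the real effort lies.
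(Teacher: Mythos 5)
Your proof is correct and follows essentially the same route as the paper: the spectral decomposition of the self-adjoint operator $A$ with respect to the measure $d\psi(x)$, the orthogonality relation~\eqref{orthorel} applied to the vectors $e_m$, and recognition of the generating function $\mathcal{Q}(x;\rho)$ after summing over $m$. Your unitary map $U$ and ``master formula'' are a more formal packaging of the paper's spectral identity with the projections $(e_m)_x$, and your explicit dominated-convergence justification of the sum--integral interchange for~\eqref{kappan} (valid since $\rho<\sqrt{\rho}$ keeps the series inside its disk of convergence) is a bookkeeping point the paper passes over silently.
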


\begin{proof}
By using Equation~\eqref{eqnu}, we have
$$
\P(\nu =m~|~N_0=n) = \frac{1}{(m+1)(1+\rho)\pi_n}\left(e_n,(\I-A)^{-1}e_m\right).
$$
By using the spectral identity, we have
$$
\left(e_n,(\I-A)^{-1}e_m\right) = \int_{-\frac{2\sqrt{\rho}}{1+\rho}}^{\frac{2\sqrt{\rho}}{1+\rho}} \frac{1}{1-x} (e_n,(e_m)_x ) d\psi(x) ,
$$
where $(e_m)_x$ is the projection of the vector $e_m$ on the vector space spanned by the vector $Q(x)$. By using the orthogonality relation~\eqref{orthorel}, we have
$$
e_m = \pi_m \int_{-\frac{2\sqrt{\rho}}{1+\rho}}^{\frac{2\sqrt{\rho}}{1+\rho}} Q_m(x) Q(x)d\psi(x).
$$
Hence,
$$
(e_m)_x= \pi_m Q_m(x) Q(x)
$$
and Equation~\eqref{numn} easily follows.

Similarly, we have
\begin{eqnarray*}
\P(\kappa = k ~|~N_0=n) &=& \sum_{m=0}^\infty  \frac{1}{(m+1)(1+\rho)\pi_n} (e_n,A^{k-1}e_m) \\
&=& \sum_{m=0}^\infty \frac{1}{(m+1)(1+\rho)\pi_n} \int_{-\frac{2\sqrt{\rho}}{1+\rho}}^{\frac{2\sqrt{\rho}}{1+\rho}} (e_n,x^{k-1} Q(x))d\psi(x) \\
&=& \sum_{m=0}^\infty \frac{\rho^m}{(1+\rho)} \int_{-\frac{2\sqrt{\rho}}{1+\rho}}^{\frac{2\sqrt{\rho}}{1+\rho}} x^{k-1} Q_m(x)Q_n(x)d\psi(x)
\end{eqnarray*}
and Equation~\eqref{kappan} follows.
Finally,
$$
\P(\kappa = k, \nu= m ~|~N_0=n) =\frac{1}{(m+1)(1+\rho)\pi_n} (e_n,A^{k-1} e_m )
$$
and Equation~\eqref{kappanun} easily follows.
\end{proof}

It is worth noting that 
\begin{multline*}
\sum_{m=0}^\infty \P(\nu =m~|~N_0=n) = \frac{1}{1+\rho} \int_{-\frac{2\sqrt{\rho}}{1+\rho}}^{\frac{2\sqrt{\rho}}{1+\rho}} \frac{\mathcal{Q}(x;\rho) Q_n(x)}{1-x}d\psi(x)\\
= \int_0^\pi   \frac{P_n(\cos\theta;1,0) \left(1-{\sqrt{\rho}} e^{i\theta   }\right)^{-\frac{1}{2}+i \frac{\cot\theta}{2}} \left(1-{\sqrt{\rho}} e^{-i\theta}\right)^{-\frac{1}{2}-i \frac{\cot\theta}{2}} }{(1-2\sqrt{\rho}\cos \theta +\rho) \sqrt{\rho}^n} d\psi(\theta).
\end{multline*}
Let $\phi(\theta)$ defined by
$$
\phi(\theta) = \atan{\frac{2\sqrt{\rho}\sin\theta}{1-2\sqrt{\rho}\cos\theta}}
$$
so that 
$$
1-\sqrt{\rho} e^{i\theta}=  \sqrt{1+\rho-2\sqrt{\rho}\cos \theta} e^{-i\phi(\theta)}.
$$
We then have
\begin{multline*}
\sum_{m=0}^\infty \P(\nu =m~|~N_0=n) = \\
\frac{1}{(\sqrt{\rho})^n} \int_0^\pi   \frac{\sin \theta  P_n(\cos\theta;1,0) }{(1-2\sqrt{\rho}\cos \theta +\rho)^{\frac{3}{2}} \cosh(\frac{\pi\cot\theta}{2})}\exp\left(\cot\theta\left(-\frac{\pi}{2}+\theta +\phi(\theta)\right) \right)d\theta.
\end{multline*}
By introducing the Laplace transform $W_n^\star(z)$ of the random variable $W_n$, which is the sojourn time of a tagged job entering the $M/M/1$ PS queue while there are $n$ jobs in service, we have from~\cite{FGBJ}
$$
\sum_{m=0}^\infty \P(\nu =m~|~N_0=n) =W_n^\star (0)=1, 
$$
as expected. This shows in particular that for all $n \geq 0$
\begin{equation}
\label{eqtech1}
\frac{1}{1+\rho} \int_{-\frac{2\sqrt{\rho}}{1+\rho}}^{\frac{2\sqrt{\rho}}{1+\rho}} \frac{\mathcal{Q}(x;\rho) Q_n(x)}{1-x}d\psi(x) = 1.
\end{equation}

By using the above computations, we have the following corollary\added{.}
\begin{corollary}
When the tagged job enters an $M/M/1$-PS queue in the stationary regime, the number $\nu$ of jobs left in the system upon service completion of the tagged customer has distribution
\begin{equation}
\label{stationarymu}
\P(\nu = m) = (1-\rho)\rho^m.
\end{equation}

The generating function of the random variable $\kappa$ (i.e., the time at which the tagged customer leaves the system) is given by
\begin{equation}
\label{genkappa}
\kappa^\star(z) = \sum_{k=0}^\infty \P(\kappa = k) z^k = \frac{1-\rho}{1+\rho} \int_{-\frac{2\sqrt{\rho}}{1+\rho}}^{\frac{2\sqrt{\rho}}{1+\rho}} \frac{z}{1-zx} \mathcal{Q}(x;\rho)^2 d\psi(x) .
\end{equation}
\end{corollary}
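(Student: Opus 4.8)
The plan is to obtain the unconditional laws of $\nu$ and $\kappa$ by averaging the conditional expressions~\eqref{numn} and~\eqref{kappan} against the distribution of the number $N_0$ of customers present when the tagged job arrives. Because the arrival stream is Poisson, the PASTA property identifies the law of $N_0$ with the stationary queue-length distribution of the $M/M/1$ queue of load $\rho$, that is $\P(N_0=n)=(1-\rho)\rho^n$ for $n\geq 0$. I would therefore start from $\P(\nu=m)=\sum_{n\geq 0}(1-\rho)\rho^n\,\P(\nu=m\mid N_0=n)$, and likewise for $\kappa$.

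For the distribution of $\nu$, inserting~\eqref{numn} and interchanging the sum over $n$ with the integral against $d\psi$ gives
\[
\P(\nu=m)=(1-\rho)\frac{\rho^m}{1+\rho}\int_{-\frac{2\sqrt{\rho}}{1+\rho}}^{\frac{2\sqrt{\rho}}{1+\rho}}\frac{Q_m(x)}{1-x}\Bigl(\sum_{n=0}^\infty \rho^n Q_n(x)\Bigr)\,d\psi(x).
\]
The inner series is exactly the generating function $\mathcal{Q}(x;\rho)$ evaluated at $z=\rho$, which converges because the singularities of $\mathcal{Q}(x;\cdot)$ sit at $z=\sqrt{\rho}\,e^{\pm i\theta}$, so that its radius of convergence equals $\sqrt{\rho}>\rho$ for $0<\rho<1$. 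The remaining integral is then precisely the left-hand side of the technical identity~\eqref{eqtech1} with $n$ replaced by $m$, which equals $1$ for every $m$; hence $\P(\nu=m)=(1-\rho)\rho^m$, as announced.

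For the generating function of $\kappa$, the same substitution into~\eqref{kappan} and the same recognition $\sum_{n}\rho^n Q_n(x)=\mathcal{Q}(x;\rho)$ yield
\[
\P(\kappa=k)=\frac{1-\rho}{1+\rho}\int_{-\frac{2\sqrt{\rho}}{1+\rho}}^{\frac{2\sqrt{\rho}}{1+\rho}} x^{k-1}\,\mathcal{Q}(x;\rho)^2\,d\psi(x).
\]
Multiplying by $z^k$, summing over $k\geq 1$ (recall $\kappa\geq 1$), and exchanging the sum over $k$ with the integral, the geometric series $\sum_{k\geq 1}z^k x^{k-1}=z/(1-zx)$ produces formula~\eqref{genkappa}; this series converges on the spectrum since $|x|\leq 2\sqrt{\rho}/(1+\rho)$, so the manipulation is valid for $|z|<(1+\rho)/(2\sqrt{\rho})$, an interval containing $[0,1]$.

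The only point requiring genuine care is the justification of the two interchanges of summation and integration, and this is where I expect the main obstacle to lie. I would settle it by Fubini's theorem: using that $1/(1-x)$ (respectively $\mathcal{Q}(x;\rho)$, which is continuous on the compact spectrum) is bounded there, Cauchy--Schwarz in $L^2(d\psi)$ combined with the orthogonality relation~\eqref{orthorel}, giving $\|Q_n\|_{L^2(d\psi)}=\pi_n^{-1/2}=\bigl((n+1)\rho^n\bigr)^{-1/2}$, yields
\[
\sum_{n=0}^\infty \rho^n \int_{-\frac{2\sqrt{\rho}}{1+\rho}}^{\frac{2\sqrt{\rho}}{1+\rho}} |Q_m(x)Q_n(x)|\,d\psi(x)\;\leq\; C\,\|Q_m\|_{L^2(d\psi)}\sum_{n=0}^\infty \frac{\rho^{n/2}}{\sqrt{n+1}}<\infty,
\]
with $C=\sup_x |1-x|^{-1}$; the crucial feature is that $\sqrt{\rho}<1$ makes the last series summable. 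This absolute convergence legitimizes both exchanges, after which the computation is entirely routine.
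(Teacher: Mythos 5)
Your proposal is correct and follows essentially the same route as the paper: decondition via PASTA with $\P(N_0=n)=(1-\rho)\rho^n$, recognize the inner sum $\sum_n \rho^n Q_n(x)$ as $\mathcal{Q}(x;\rho)$, invoke the identity~\eqref{eqtech1} for $\nu$, and sum the geometric series $\sum_{k\geq 1}z^k x^{k-1}=z/(1-zx)$ for $\kappa$. The only difference is that you supply the Fubini/Cauchy--Schwarz justification for the interchanges of sum and integral, which the paper's terse proof omits; this is a welcome addition but not a different argument.
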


\begin{proof}
Since $\P(N_0=n)=(1-\rho)\rho^n$, we have
$$
\P(\nu = m) = \frac{(1-\rho) \rho^m}{1+\rho} \int_{-\frac{2\sqrt{\rho}}{1+\rho}}^{\frac{2\sqrt{\rho}}{1+\rho}} \frac{\mathcal{Q}(x;\rho) Q_m(x)}{1-x}d\psi(x) = (1-\rho)\rho^m,
$$
where we have used Equation~\eqref{eqtech1} in the last step.

Using again  $\P(N_0=n)=(1-\rho)^n$, we immediately obtain Equation~\eqref{genkappa}. We verify that $\kappa^\star(1) =1$ by using Equation~\eqref{eqtech1}.
\end{proof}

Equation~\eqref{stationarymu} is consistent with the fact that in the stationary regime, the distribution of the occupancy of the queue seen by departing customers is the same as the  distribution seen by arriving customers (equal to the stationary distribution owing to the PASTA property); this is a classical result in queuing theory. Moreover, the sojourn time of the tagged customer finding $n$ customers in the queue is 
$$
W_n = \mathcal{E}_1 + \ldots +\mathcal{E}_\kappa,
$$
where $(\mathcal{E}_i)$ is a sequence of independent and identically distributed exponential random variables with mean $\frac{1}{1+\rho}$. It follows that the Laplace transform of $W_n$ is
\begin{eqnarray*}
W^\star_n(z)  &=& \sum_{k=1}^\infty \frac{1}{1+\rho}    \int_{-\frac{2\sqrt{\rho}}{1+\rho}}^{\frac{2\sqrt{\rho}}{1+\rho}} x^{k-1} \left(\frac{1+\rho}{z+1+\rho}  \right)^k \mathcal{Q}(x;\rho)Q_n(x)d\psi(x) \\
&=& \int_{-\frac{2\sqrt{\rho}}{1+\rho}}^{\frac{2\sqrt{\rho}}{1+\rho}} \frac{1}{z+(1+\rho)(1-x)}  \mathcal{Q}(x;\rho)Q_n(x)d\psi(x) \\
&=&\frac{1}{(\sqrt{\rho})^n} \int_0^\pi   \frac{\sin \theta  P_n(\cos\theta;1,0) \exp\left(\phi(\theta)\cot\theta \right) }{(1-2\sqrt{\rho}\cos \theta +\rho)^{\frac{1}{2}}   (z+1+\rho-2\sqrt{\rho}\cos\theta)} d\psi(\theta).
\end{eqnarray*}
By inverting the Laplace transform we eventually obtain
\begin{multline*}
\P(W_n >y) =  \\ \frac{1}{(\sqrt{\rho})^n} \int_0^\pi   \frac{\sin \theta  P_n(\cos\theta;1,0)  \exp\left(\phi(\theta) \cot\theta \right) }{(1-2\sqrt{\rho}\cos \theta +\rho)^{\frac{3}{2}} }  e^{-(1+\rho-2\sqrt{\rho}\cos\theta)y } d\psi(\theta),
\end{multline*}
which corresponds to Equation~(15) in~\cite{FGBJ}.

\begin{corollary}
The number of arrivals $\alpha$ at the queue while the tagged job  is in service is given by
\begin{equation}
\label{disalpha}
\P(\alpha=j) =\frac{1-\rho}{1+\rho} \sum_{n=0}^\infty \sum_{m=0}^{j+n} \rho^{m+n}  \int_{-\frac{2\sqrt{\rho}}{1+\rho}}^{\frac{2\sqrt{\rho}}{1+\rho}} x^{2j+n-m}Q_m(x)Q_n(x)d\psi(x).
\end{equation}
and the number of departures $\delta$ under the same condition is such that $\delta  \stackrel{d}{=} \alpha$ (equality in distribution).
\end{corollary}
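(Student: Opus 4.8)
The plan is to obtain both distributions from the joint law of $(\kappa,\nu)$ given in Equation~\eqref{kappanun} combined with the stationary input distribution $\P(N_0=n)=(1-\rho)\rho^n$, and then to read off the equality in distribution from a symmetry of the resulting expression under exchange of the initial and final occupancies. First I would unconditionalize: multiplying Equation~\eqref{kappanun} by $(1-\rho)\rho^n$ gives the joint law
$$
\P(\kappa=k,\nu=m,N_0=n)=(1-\rho)\rho^n\,\frac{\rho^m}{1+\rho}\int x^{k-1}Q_m(x)Q_n(x)\,d\psi(x).
$$

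Next I would translate the events into constraints on $(\kappa,\nu,N_0)=(k,m,n)$. Since $\alpha=\frac{\kappa+\nu-n-1}{2}$, fixing $N_0=n$ and $\nu=m$ forces $\kappa=2j+n-m+1$ on the event $\{\alpha=j\}$, so that $x^{k-1}=x^{2j+n-m}$. The two nonnegativity requirements $\kappa\ge1$ and $\delta=j+n-m\ge0$ then collapse to the single constraint $0\le m\le j+n$, the value $k=2j+n-m+1$ being automatically $\ge1$ on this range. Summing the joint law over $n\ge0$ and $0\le m\le j+n$ yields exactly Equation~\eqref{disalpha}. The term-by-term integration is justified by absolute convergence, using $|x|\le\frac{2\sqrt\rho}{1+\rho}<1$ and the convergence of the generating function $\mathcal{Q}(x;\rho)=\sum_m\rho^mQ_m(x)$.

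For $\delta$ I would run the identical computation. On $\{\delta=j\}$ one has $\kappa=2j+m-n+1$, hence $x^{k-1}=x^{2j+m-n}$, and the constraints $\kappa\ge1$ and $\alpha=j+m-n\ge0$ reduce to $m\ge\max(0,n-j)$, giving
$$
\P(\delta=j)=\frac{1-\rho}{1+\rho}\sum_{n=0}^\infty\sum_{m=\max(0,n-j)}^\infty\rho^{m+n}\int x^{2j+m-n}Q_m(x)Q_n(x)\,d\psi(x).
$$
The key observation is that the kernel $\rho^{m+n}Q_m(x)Q_n(x)$ is symmetric in $(m,n)$ whereas the exponent of $x$ is antisymmetric, so that relabeling $m\leftrightarrow n$ carries the $\delta$-summand onto the $\alpha$-summand. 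The summation regions match as well: the relabeling sends $\{n\ge0,\ m\ge\max(0,n-j)\}$ bijectively onto $\{n\ge0,\ 0\le m\le j+n\}$, since $m\ge n-j$ becomes $m\le n+j$. Therefore $\P(\delta=j)=\P(\alpha=j)$ for every $j$, i.e. $\delta\stackrel{d}{=}\alpha$.

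The main obstacle is bookkeeping rather than analysis: one must check that the two boundary requirements in each case really do reduce to a single clean inequality, and that the $m\leftrightarrow n$ relabeling maps the $\delta$-summation region bijectively onto the $\alpha$-summation region, so that no boundary terms are created or lost. A secondary point requiring care is the justification of the interchange of summation and integration, which I would handle through the spectral bound on $x$ and the convergence of $\mathcal{Q}(x;\rho)$ guaranteeing absolute convergence of the double series.
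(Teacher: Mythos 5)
Your proof is correct, and its overall architecture coincides with the paper's: decompose $\{\alpha=j\}$ (resp.\ $\{\delta=j\}$) over the values of $\nu$, insert the spectral representation~\eqref{kappanun} of the joint law of $(\kappa,\nu)$, decondition with $\P(N_0=n)=(1-\rho)\rho^n$, and conclude $\delta \stackrel{d}{=} \alpha$ from the symmetry of the kernel $\rho^{m+n}Q_m(x)Q_n(x)$ together with the bijection between the two summation regions under the relabeling $m\leftrightarrow n$. The one point where you genuinely diverge is how the summation range is truncated: the paper sums over all $m$ allowed by $\kappa\geq 1$ alone ($0\leq m\leq 2j+n$ for $\alpha$, $m\geq (n-2j)^+$ for $\delta$) and then invokes the orthogonality identity~\eqref{prodtech} --- a degree-counting argument showing $\int x^{2j+n-m}Q_m(x)Q_n(x)\,d\psi(x)=0$ for $m>j+n$ --- to discard the extra terms, whereas you discard them at the probabilistic level by noting that those index pairs force the complementary count ($\delta=j+n-m$, resp.\ $\alpha=j+m-n$) to be negative, so the corresponding events are empty. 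The two justifications are equivalent (the vanishing integrals are precisely the probabilities of those empty events), but yours is more elementary, since it never uses the orthogonality of the $Q_m$'s beyond what is already packaged in~\eqref{kappanun}; the paper's version has the side benefit of recording the vanishing as an analytic identity that it reuses when writing down the $\delta$ sum. Your explicit remark on justifying the interchange of summation and integration via $|x|\leq \frac{2\sqrt{\rho}}{1+\rho}<1$ addresses a point the paper leaves implicit.
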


\begin{proof}
Let $ j\geq 0$ be an integer. We have 
\begin{eqnarray*}
\P(\alpha=j~|~N_0=n) &=& \P(\kappa+\nu-n-1=2j~|~N_0=n) \\
&=& \sum_{m=0}^{2j+n}\P(\kappa=2j+n-m+1,\nu=m~|~N_0=n) \\
&=& \sum_{m=0}^{2j+n} \frac{1}{(m+1)(1+\rho)} {}^te_n A^{2j+n-m}e_m  \\ 
&=& \sum_{m=0}^{2j+n} \frac{1}{(m+1)(1+\rho)\pi_n} (e_n ,A^{2j+n-m}e_m ) \\
&=&\sum_{m=0}^{2j+n} \frac{\pi_m}{(m+1)(1+\rho)}  \int_{-\frac{2\sqrt{\rho}}{1+\rho}}^{\frac{2\sqrt{\rho}}{1+\rho}} x^{2j+n-m}Q_m(x)Q_n(x)d\psi(x)
\end{eqnarray*}

We note that $x^{2j+n-m}Q_n(x)$ is a polynomial with degree $2j+2n-m$ and hence owing to the orthogonality property of polynomials $Q_m(x)$
\begin{equation}
\label{prodtech}
\int_{-\frac{2\sqrt{\rho}}{1+\rho}}^{\frac{2\sqrt{\rho}}{1+\rho}} x^{2j+n-m}Q_m(x)Q_n(x)d\psi(x) =0
\end{equation}
for $2j+2n-m < m$ that is $m > j+n$. It follows that 
$$
\P(\alpha=j~|~N_0=n) =\sum_{m=0}^{j+n} \frac{\rho^m}{1+\rho}  \int_{-\frac{2\sqrt{\rho}}{1+\rho}}^{\frac{2\sqrt{\rho}}{1+\rho}} x^{2j+n-m}Q_m(x)Q_n(x)d\psi(x)
$$
and Equation~\eqref{disalpha} follows by deconditioning on $N_0$.

By using similar arguments, we have
\begin{eqnarray*}
\P(\delta=j~|~N_0=n) &=& \P(\kappa-\nu+n-1=2j~|~N_0=n) \\
&=& \frac{1}{1+\rho}  \sum_{m=(n-2j)^+}^{\infty} \rho^m \int_{-\frac{2\sqrt{\rho}}{1+\rho}}^{\frac{2\sqrt{\rho}}{1+\rho}} x^{2j-n+m}Q_m(x)Q_n(x)d\psi(x)
\end{eqnarray*}
By using Equation~\eqref{prodtech}, we deduce that
$$
\int_{-\frac{2\sqrt{\rho}}{1+\rho}}^{\frac{2\sqrt{\rho}}{1+\rho}} x^{2j-n+m}Q_m(x)Q_n(x)d\psi(x)=0
$$
for $m <j+n$ and hence
\begin{eqnarray*}
\P(\delta=j) &=&\frac{1-\rho}{1+\rho} \sum_{n=0}^\infty \sum_{m=(n-j)^+}^{\infty} \rho^{m+n}  \int_{-\frac{2\sqrt{\rho}}{1+\rho}}^{\frac{2\sqrt{\rho}}{1+\rho}} x^{2j-n+m}Q_m(x)Q_n(x)d\psi(x) \\
&=&\frac{1-\rho}{1+\rho} \sum_{m=0}^\infty \sum_{n=0}^{m+j} \rho^{m+n}  \int_{-\frac{2\sqrt{\rho}}{1+\rho}}^{\frac{2\sqrt{\rho}}{1+\rho}} x^{2j-n+m}Q_m(x)Q_n(x)d\psi(x)\\
&=& \P(\alpha=j),
\end{eqnarray*}
so that we have $\delta  \stackrel{d}{=}\alpha$.
\end{proof}

To conclude this section, let us study the asymptotic behavior of the common distribution of the random variables $\alpha$ and $\delta$. In the following, we set $e=\exp(1)$.

\begin{proposition}
When $j$ tends to infinity, we have
\begin{equation}
    \label{alphaj}
\P(\delta=j) \sim\frac{4 }{1-{\rho}}e^{2 \frac{1+\rho}{1- {\rho}}} \sqrt{\frac{8}{3}\left(\frac{\pi}{2}\right)^{\frac{5}{3}}} r(j),
\end{equation}
where
\begin{equation}
    \label{defrj}
r(j) =     \frac{1}{j^{\frac{5}{6}}}    e^{-3 \left(\frac{\pi}{2}  \right)^{\frac{2}{3}} j^{\frac{1}{3}}     }   \left(\frac{2\sqrt{\rho}}{1+\rho}\right)^{2j} .
\end{equation}
\end{proposition}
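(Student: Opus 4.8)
The plan is to derive the asymptotics of $\P(\delta=j)=\P(\alpha=j)$ directly from the integral representation \eqref{disalpha} by a Laplace argument localized at the two spectral edges $x=\pm b$, where $b=\frac{2\sqrt{\rho}}{1+\rho}$ and where the factor $x^{2j}$ concentrates all the mass. First I would recast \eqref{disalpha}: since $x^{2j+n-m}\rho^{m+n}=x^{2j}(\rho x)^{n}(\rho/x)^{m}$ and the terms with $m>j+n$ integrate to $0$ by \eqref{prodtech}, the truncated double sum equals the full one and
\[
\frac{1+\rho}{1-\rho}\,\P(\alpha=j)=\sum_{n,m\ge 0}\rho^{n+m}\int_{-b}^{b}x^{2j+n-m}Q_n(x)Q_m(x)\,d\psi(x).
\]
On a fixed neighbourhood of each edge, where $\sqrt{\rho}<|x|\le b$, the two series $\sum_{n}(\rho x)^{n}Q_n(x)=\mathcal{Q}(x;\rho x)$ and $\sum_{m}(\rho/x)^{m}Q_m(x)=\mathcal{Q}(x;\rho/x)$ converge absolutely, so the local contribution resums to $\int x^{2j}\,\mathcal{Q}(x;\rho x)\,\mathcal{Q}(x;\rho/x)\,d\psi(x)$. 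Since $b>\sqrt{\rho}$ whenever $\rho<1$, these neighbourhoods are genuine intervals; on the complementary bulk $|x|\le b-\eta$ the factor $x^{2j}$ is smaller than $b^{2j}$ by an exponential factor, so that part is negligible against the stretched-exponential edge estimate obtained below and may be bounded by elementary means.

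Next I would set $x=b\cos\theta$ and analyse the edge $\theta=0$; the edge $\theta=\pi$ is identical and, by the symmetry $\mathcal{Q}(-b;-\rho b)\,\mathcal{Q}(-b;-\rho/b)=\mathcal{Q}(b;\rho b)\,\mathcal{Q}(b;\rho/b)$, contributes equally, giving an overall factor $2$. Three expansions drive the analysis: $x^{2j}=b^{2j}\cos^{2j}\theta=b^{2j}e^{-j\theta^{2}(1+o(1))}$; from \eqref{defpsi}, the density satisfies $\frac{d\psi}{d\theta}\sim 2e\,\theta\,e^{-\pi/\theta}$ as $\theta\to 0$, the $\cosh$ and the exponential combining to leave a single essential singularity; and, using \eqref{Pgen0}, the smooth factor tends to the finite limit $F(b)=\mathcal{Q}(b;\rho b)\,\mathcal{Q}(b;\rho/b)=\frac{2(1+\rho)}{(1-\rho)^{2}}\,e^{(1+3\rho)/(1-\rho)}$. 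Substituting, the edge contribution becomes $\frac{1-\rho}{1+\rho}\,2\,b^{2j}\,\big(1+o(1)\big)\int_{0}^{\infty}2e\,F(b)\,\theta\,e^{-(j\theta^{2}+\pi/\theta)}\,d\theta$.

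The core step is Laplace's method applied to $\int_{0}^{\infty}\theta\,e^{-h(\theta)}\,d\theta$ with $h(\theta)=j\theta^{2}+\pi/\theta$. This phase has a unique interior minimum at $\theta^{*}=(\pi/(2j))^{1/3}$, and it is precisely this cubic balance between $j\theta^{2}$ and $\pi/\theta$ that produces the $j^{1/3}$ in the exponent: one finds $h(\theta^{*})=3(\pi/2)^{2/3}j^{1/3}$ and $h''(\theta^{*})=6j$. The Gaussian factor $\sqrt{2\pi/h''(\theta^{*})}=\sqrt{\pi/(3j)}$ together with the prefactor $\theta^{*}=(\pi/2)^{1/3}j^{-1/3}$ supplies the power $j^{-5/6}$, while $e^{-h(\theta^{*})}$ and $b^{2j}$ complete $r(j)$ from \eqref{defrj}. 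Collecting $\frac{1-\rho}{1+\rho}$, the edge factor $2$, the constants $2e$ and $F(b)$, and the Gaussian factor, and using $e\cdot e^{(1+3\rho)/(1-\rho)}=e^{2(1+\rho)/(1-\rho)}$ together with the identity $8(\pi/2)^{1/3}\sqrt{\pi/3}=4\sqrt{\tfrac{8}{3}(\pi/2)^{5/3}}$, reproduces exactly the prefactor in \eqref{alphaj}.

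\textbf{Main obstacle.} The delicate part is to make this formal computation rigorous in the presence of the essential singularity of $d\psi$ at the edge. One must verify that, uniformly over the saddle window $\theta=\theta^{*}(1+o(1))$, the replacements $\cos^{2j}\theta\to e^{-j\theta^{2}}$, $\cot\theta\to 1/\theta$ and $\mathcal{Q}(x;\rho x)\mathcal{Q}(x;\rho/x)\to F(b)$ cost only a $(1+o(1))$ factor; the induced corrections are of orders $j\theta^{*4}$, $\theta^{*2}$ and $\theta^{*}$, all $o(1)$ since $\theta^{*}\sim j^{-1/3}$, but confirming this and showing that the tails $\theta\gg\theta^{*}$ contribute negligibly requires care. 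A secondary nuisance, already flagged, is that $\mathcal{Q}(x;\rho/x)$ is summable only for $|x|>\sqrt{\rho}$, so the edge localization and the separate, elementary bulk bound must be arranged before the two generating functions are combined.
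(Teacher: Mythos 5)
Your overall route is the same as the paper's: reduce to the spectral--edge neighbourhoods where $x^{2j}$ concentrates, resum the two series into the Pollaczek generating functions ($\mathcal{Q}(x;\rho x)\,\mathcal{Q}(x;\rho/x)$, which in the paper's notation is $\mathcal{P}(\theta,\tfrac{2\rho\cos\theta}{1+\rho})\mathcal{P}(\theta,\tfrac{1+\rho}{2\cos\theta})$), use $\tfrac{d\psi}{d\theta}\sim 2e\,\theta e^{-\pi/\theta}$, and run a Laplace analysis on $j\theta^2+\pi/\theta$ with the cube-root balance. Your saddle computations ($\theta^*=(\pi/(2j))^{1/3}$, $h(\theta^*)=3(\pi/2)^{2/3}j^{1/3}$, $h''(\theta^*)=6j$) and the constant bookkeeping are correct and reproduce the prefactor of \eqref{alphaj}; the paper does the identical computation after the substitution $\varphi=\theta^2$, with $g(\varphi)=\varphi+\pi/\sqrt{\varphi}$. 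However, your opening ``recasting'' step contains a genuine error. The identity
\begin{equation*}
\frac{1+\rho}{1-\rho}\,\P(\alpha=j)=\sum_{n,m\ge 0}\rho^{n+m}\int_{-b}^{b}x^{2j+n-m}Q_n(x)Q_m(x)\,d\psi(x), \qquad b=\tfrac{2\sqrt{\rho}}{1+\rho},
\end{equation*}
is not valid, because \eqref{prodtech} does not cover all the terms you are adding. Equation~\eqref{prodtech} rests on a degree argument: $x^{2j+n-m}Q_n(x)$ must be a polynomial of degree $2j+2n-m<m$, which requires $j+n<m\le 2j+n$. For $m>2j+n$ the exponent $2j+n-m$ is negative, the integrand is no longer a polynomial, and the integral actually diverges, since $x=0$ lies inside the support of $\psi$ and $\tfrac{d\psi}{dx}(0)>0$ (at $\theta=\pi/2$ the density in $\theta$ equals $1$). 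So the ``full'' double sum contains infinitely many ill-defined terms.

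This error is not cured by the rearrangement you flag at the end. Once you localize to an edge interval $\{\sqrt{\rho}<|x|\le b\}$ --- which you must do before resumming, as you note --- orthogonality is no longer available, because \eqref{prodtech} is an identity over all of $[-b,b]$, not over a sub-interval. Hence on the edge piece the terms with $m>j+n$ do \emph{not} integrate to zero: your resummed integrand $x^{2j}\mathcal{Q}(x;\rho x)\mathcal{Q}(x;\rho/x)$ differs from the true truncated sum by a nonzero quantity that has to be estimated, not dismissed. This is exactly the term $p_{1,2}(j)$ in the paper's proof, which is bounded using the normalization \eqref{normcond} and a geometric series (convergent precisely because $\cos\eta>\tfrac{1+\rho}{2}$, i.e.\ $|x|>\sqrt{\rho}$) by $\tfrac{1}{1-\rho}\left(\tfrac{1+\rho}{2}\right)^j\left(\tfrac{2\sqrt{\rho}}{1+\rho}\right)^{2j}$, which is exponentially negligible relative to $r(j)$. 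Your sketch has no mechanism to produce such a bound, since it treats the extension of the $m$-sum as exact. Supplying this estimate, together with the Cauchy--Schwarz bulk bound you allude to (the paper's $p_2(j)$), turns your outline into the paper's proof; without it, the central step --- replacing the truncated sum by the product of generating functions on the edge --- is unjustified.
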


\begin{proof}
The idea of the proof is to show that when $j$ tends to infinity 
$$
\P(\delta= j) \sim  \frac{1-\rho}{1+\rho} \sum_{m=0}^\infty \sum_{n=0}^{\infty} \rho^{m+n}  \int_{-\frac{2\sqrt{\rho}}{1+\rho}}^{\frac{2\sqrt{\rho}}{1+\rho}} x^{2j-n+m}Q_m(x)Q_n(x)d\psi(x)
$$
and to use the same technique as in~\cite{Flatto} to obtain Equation~\eqref{alphaj}.

By definition, we have
$$
\P(\delta =j) =\frac{1-\rho}{1+\rho} \sum_{n=0}^\infty \sum_{m=0}^{j+n} (\sqrt{\rho})^{m+n} \left(\frac{2\sqrt{\rho}}{1+\rho}\right)^{2j+n-m} \int_{0}^{\pi}(\cos\theta)^{2j}h_{n,m}(\theta)d\psi(\theta),
$$
where 
$$
h_{n,m}(\theta) = (\cos\theta)^{n-m}P_m(\cos\theta;1,0)P_n(\cos\theta;1,0).
$$
We clearly have
$$
 \left(\frac{2\sqrt{\rho}}{1+\rho}\right)^{2j}\int_{\pi}^{0}( \cos\theta)^{2j}h_{n,m}(\theta)d\psi(\theta) = 2 \left(\frac{2\sqrt{\rho}}{1+\rho}\right)^{2j}\int_{0}^{\frac{\pi}{2}}( \cos\theta)^{2j}h_{n,m}(\theta)d\psi(\theta),
$$

Let $\eta\in \left(0,\frac{\pi}{2}\right)$. We can write
$$
\P(\alpha=j) = p_1(j) +p_2(j),
$$
where
\begin{eqnarray*}
p_1(j) &=&2 \frac{1-\rho}{1+\rho} \sum_{n=0}^\infty \sum_{m=0}^{j+n} (\sqrt{\rho})^{m+n} \left(\frac{2\sqrt{\rho}}{1+\rho}\right)^{2j+n-m} \int_{0}^{\eta}(\cos\theta)^{2j}h_{n,m}(\theta)d\psi(\theta),\\
p_2(j) &=& 2\frac{1-\rho}{1+\rho} \sum_{n=0}^\infty \sum_{m=0}^{j+n} (\sqrt{\rho})^{m+n} \left(\frac{2\sqrt{\rho}}{1+\rho}\right)^{2j+n-m} \int_{\eta}^{\frac{\pi}{2}}(\cos\theta)^{2j}h_{n,m}(\theta)d\psi(\theta).
\end{eqnarray*}

We first note that 
\begin{align*}
&2\left|\int_{\eta}^{\frac{\pi}{2}}(\cos\theta)^{2j+n-m}  P_m(\cos\theta;1,0)P_n(\cos\theta;1,0)d\psi(\theta)  \right|  \\
&\leq (\cos\eta)^{2j+n-m} \int_{\eta}^{{\pi}}\left|P_m(\cos\theta;1,0)P_n(\cos\theta;1,0)\right|d\psi(\theta) \\
&\leq (\cos\eta)^{2j+n-m} \sqrt{\int_0^\pi P_m(\cos\theta;1,0)^2d\psi(\theta)} \sqrt{\int_{0}^{\pi}P_n(\cos\theta;1,0)^2d\psi(\theta)} \\
& \leq \frac{(\cos\eta)^{2j+n-m}}{\sqrt{(m+1)(n+1)}},
\end{align*}
where we have used the normalizing condition~\eqref{normcond}. It follows that if we choose $\eta$ sufficiently small so that $\cos\eta>\frac{1+\rho}{2}$
\begin{eqnarray*}
p_2(j) &\leq&  \frac{1-\rho}{1+\rho} \sum_{n=0}^\infty \left(\frac{2\rho\cos\eta}{1+\rho}\right)^n \sum_{m=0}^\infty \left(\frac{1+\rho}{2 \cos\eta}\right)^m \left(\frac{2\sqrt{\rho}\cos\eta}{1+\rho}\right)^{2j}\\
&=& \frac{2(1-\rho)\cos\eta}{(1+\rho-2\rho \cos\eta)(2\cos\eta - 1- \rho)} \left(\frac{2\sqrt{\rho}\cos\eta}{1+\rho}\right)^{2j}.
\end{eqnarray*}
It is then  easily checked that 
$$
\lim_{j\to \infty}\frac{p_2(j)}{r(j)}=0,
$$
where $r(j)$ is defined by Equation~\eqref{defrj}.

Now, we can decompose $p_1(j)$ as $p_1(j) = p_{1,1}(j)-p_{1,2}(j)$ with
$$
 p_{1,1}(j) =  2 \frac{1-\rho}{1+\rho} \sum_{n=0}^\infty \sum_{m=0}^{\infty} (\sqrt{\rho})^{m+n} \left(\frac{2\sqrt{\rho}}{1+\rho}\right)^{2j+n-m}\int_{0}^{\eta}(\cos\theta)^{2j}h_{n,m}(\theta)d\psi(\theta)*
 $$
 and
 $$
  p_{1,2}(j)  = 2 \frac{1-\rho}{1+\rho} \sum_{n=0}^\infty \sum_{m=n+j+1}^{\infty} (\sqrt{\rho})^{m+n} \left(\frac{2\sqrt{\rho}}{1+\rho}\right)^{2j+n-m}\int_{0}^{\eta}(\cos\theta)^{2j}h_{n,m}(\theta)d\psi(\theta).
$$
By using the same arguments as above, we can show that the quantity $p_{1,2}(j)$ is less than or equal to 
\begin{multline*}
  \frac{1-\rho}{1+\rho} \sum_{n=0}^\infty \sum_{m=n+j+1}^\infty  (\sqrt{\rho})^{m+n} \left(\frac{2\sqrt{\rho}}{1+\rho}\right)^{2j+n-m} \\ \int_{0}^{{\pi}}\left|P_m(\cos\theta;1,0)P_n(\cos\theta;1,0)\right|d\psi(\theta) 
 \end{multline*}
 and then
 \begin{eqnarray*}
 p_{1,2}(j)& \leq&  \frac{1-\rho}{1+\rho} \sum_{n=0}^\infty \sum_{m=n+j+1}^\infty  (\sqrt{\rho})^{m+n} \left(\frac{2\sqrt{\rho}}{1+\rho}\right)^{2j+n-m}\\
 &=& \frac{1}{1-\rho} \left(\frac{1+\rho}{2}\right)^j\left(\frac{2\sqrt{\rho}}{1+\rho}\right)^{2j}.
\end{eqnarray*}
Because $  \frac{1+\rho}{2}<1$, we clearly have 
$$
\lim_{j\to \infty}\frac{p_{1,2}(j)}{r(j)}=0.
$$

Finally, we have
$$
 p_{1,1}(j) =  2 \frac{1-\rho}{1+\rho}  \int_{0}^{\eta}\left(\frac{2\sqrt{\rho}\cos\theta}{1+\rho}\right)^{2j}      \mathcal{P}\left(\theta,\frac{2\sqrt{\rho}\cos\theta}{1+\rho}\right)  
 \mathcal{P}\left(\theta,\frac{1+\rho}{2\cos\theta}\right)  
 d\psi(\theta)
$$

For small $\theta$
$$
\frac{\sin \theta}{\cosh\frac{\pi \cot\theta}{2}}\exp\left(\left(\theta-\frac{\pi}{2}\right)\cot\theta\right) \sim 2e \theta \exp\left( -\frac{\pi}{\theta} \right) 
$$
and
$$
(\cos\theta)^n \sim 1-\frac{n\theta^2}{2}.
$$
By mimicking the proof in~\cite[Section~6]{Flatto}, we have for large $j$
\begin{eqnarray*}
p_{1,1}(j) &\sim& 2  \kappa_1   \left(\frac{2\sqrt{\rho}}{1+\rho}\right)^{2j}    \int_{0}^{\frac{\pi}{2}} e^{-{j\theta^2}-\frac{\pi}{\theta} }\theta d\theta\\  &=&   \kappa_1 \left(\frac{2\sqrt{\rho}}{1+\rho}\right)^{2j}  \frac{2}{j^{\frac{2}{3}}} \int_{0}^{\frac{\pi}{2}j^{\frac{1}{3}}} e^{-j^{\frac{1}{3}}\left(\theta^2+\frac{\pi}{\theta}\right) }\theta d\theta \\
&=&   \kappa_1 \left(\frac{2\sqrt{\rho}}{1+\rho}\right)^{2j}      \frac{1}{ j^{\frac{2}{3}}} \int_{0}^{\frac{\pi^2}{4}j^{\frac{2}{3}}} e^{-j^{\frac{1}{3}}\left(\varphi+\frac{\pi}{\sqrt{\varphi}}\right) }d\varphi
\end{eqnarray*}
where we have set $\varphi = \theta^2$ and 
$$
\kappa_1 = 2e \frac{1-\rho}{1+\rho}  \mathcal{P}\left(0,\frac{2{\rho}}{1+\rho}\right)  
 \mathcal{P}\left(0,\frac{1+\rho}{2}\right)
$$.

Let us introduce $g(\varphi) =\varphi+\frac{\pi}{\sqrt{\varphi}} $. We have
$$
g'(\varphi) = 1-\frac{\pi}{2\varphi^{\frac{3}{2}}}
$$
and $g'(\varphi)=0$ for $\varphi=\varphi_0 =\left(\frac{\pi}{2}\right)^{\frac{2}{3}} $. In addition, $g''(\varphi_0)= \frac{3}{2}\left(\frac{\pi}{2}\right)^{-\frac{2}{3}}$.
By using Laplace's method, we  obtain
$$
 \int_{0}^{\frac{\pi^2}{4}j^{\frac{2}{3}}} e^{-j^{\frac{1}{3}}\left(\varphi+\frac{\pi}{\sqrt{\varphi}}\right) }d\varphi \sim \sqrt{\frac{2\pi}{j^{\frac{1}{3}}g''(\varphi_0)}}e^{-j^{\frac{1}{3}}g(\varphi_0)} = \sqrt{\frac{8}{3j^{\frac{1}{3}}}\left(\frac{\pi}{2}\right)^{\frac{5}{3}}}e^{-3  \left(\frac{\pi}{2}  \right)^{\frac{2}{3}}j^{\frac{1}{3}}     }
$$
It follows that for large $j$
$$
p_{1,1}(j)(j) \sim \kappa_1  \left(\frac{2\sqrt{\rho}}{1+\rho}\right)^{2j}  \sqrt{\frac{8}{3}\left(\frac{\pi}{2}\right)^{\frac{5}{3}}}   \frac{1}{j^{\frac{5}{6}}}    e^{-3  \left(\frac{\pi}{2}  \right)^{\frac{2}{3}}  j^{\frac{1}{3}}   }
$$

By using Equation~\eqref{Pgen0}, we obtain
$$
\mathcal{P}\left(0,\frac{2{\rho}}{1+\rho}\right) = \frac{1+\rho}{1-\rho}e^{\frac{2{\rho}}{1-{\rho}}}, \quad 
 \mathcal{P}\left(0,\frac{1+\rho}{2}\right) =  \frac{2}{1-\rho} e^{\frac{1+\rho}{1-\rho}},
$$
and Equation~\eqref{alphaj} follows.
\end{proof}

 In the following section, we numerically compute the distribution of $\delta$ and we illustrate the asymptotic behavior of the tail of the distribution.

%%%%%%%%%%%%%%%%%%%%%%%%%%%%%%%%%%%%%%%%%%%%%%%%%%%%%%%%%%%%%%%%%%%%%%%%%%%%%%%%%%%%%%%%%%%%%%%%%%%%%%%%%%%%%%%%%%%%%%%%%%%%%%%%%%%%%%%%%%%%%%%%%%%

\section{Numerical Experiments}
\label{numerical}

\subsection{Number of departures}
To compute $\P(\delta=j)$ for $j \geq 0$, we use the following procedure. Since the polynomials $Q_n(x)$ satisfy the recursion defined by Equation~\eqref{recurQ} with  $Q_0(x) = 1$ and $Q_1(x) =\frac{1+\rho}{\rho} x$, we can represent the polynomial $Q_n(x)$ by a list  $L_n=\{ l_{n,n}, l_{n,n-1}, \ldots, l_{n,0} \}$ so that
$$
Q_n(x) = \sum_{k=0}^n l_{n,k} x^k,
$$
where the coefficients $l_{n,k}$ can be recursively computed by using the fact that for $n \geq 2$ and $0\leq k \leq n$
$$
l_{n,k} = \frac{1+\rho}{\rho}l_{n-1,k-1}\mathbbm{1}_{\{1\leq k\leq n}-\frac{n-1}{n}l_{n-2,k}\mathbbm{1}_{\{0\leq k\leq n-2}
$$
with $l_{0,0}=1$, $l_{1,1}= \frac{1+\rho}{\rho}$ and $l_{1,0}=0$.

By using the above definitions, the quantity $\P(\alpha=j)$ can be expressed as
$$
\P(\delta = j) = \frac{1-\rho}{1+\rho}\sum_{n=0}^{\infty} \sum_{m=0}^{j+n} \rho^{m+n}\sum_{s=0}^{m}l_{m,s}\sum_{t=0}^{n}l_{n,t}
\mu_{2j+n-m+s+t},
$$
where $\mu_n$ is the $n$-th moment of the measure $d\psi(x)$ defined by
\begin{equation}
\label{defmun}
\mu_n = \int_{-\frac{2\sqrt{\rho}}{1+\rho}}^{\frac{2\sqrt{\rho}}{1+\rho}} x^{n} d\psi(x).
\end{equation}
By using the orthogonality relation~\eqref{orthorel}, we have for $n \geq 1$
$$
\int_{-\frac{2\sqrt{\rho}}{1+\rho}}^{\frac{2\sqrt{\rho}}{1+\rho}} Q_n(x) d\psi(x) =0
$$
and the moments $\mu_n$ for $n \geq 1$ can be recursively computed by 
$$
\mu_n= -\frac{1}{l_{n,n}}\sum_{k=0}^{n-1}l_{n,k} \mu_k
$$
with $\mu_0=1$.

The moment $\mu_n$ is analytically defined by
$$
\mu_n = \int_{0}^{{\pi} } \left(\frac{2\sqrt{\rho}}{1+\rho}  \right)^n(\cos\theta)^n \frac{\sin \theta}{\cosh\frac{\pi \cot\theta}{2}}\exp\left(\left(\theta-\frac{\pi}{2}\right)\cot\theta\right)d\theta.
$$
It is easily checked that $\mu_n=0$ for odd $n$ and for even $n$
$$
\mu_n = 2\int_{0}^{\frac{\pi}{2}} \left(\frac{2\sqrt{\rho}}{1+\rho}  \right)^n(\cos\theta)^n \frac{\sin \theta}{\cosh\frac{\pi \cot\theta}{2}}\exp\left(\left(\theta-\frac{\pi}{2}\right)\cot\theta\right)d\theta
$$
We can estimate the value of $\mu_n$ for large $n$ by using the same arguments as above  and we obtain 
\begin{comment}
(see~\cite[Section~6]{Flatto}). For small $\theta$
$$
\frac{\sin \theta}{\cosh\frac{\pi \cot\theta}{2}}\exp\left(\left(\theta-\frac{\pi}{2}\right)\cot\theta\right) \sim 2 e  \theta \exp\left( -\frac{\pi}{\theta} \right)
$$
and
$$
(\cos\theta)^n \sim 1-\frac{n\theta^2}{2}.
$$
We then have for large $n$
$$
\mu_n \sim  4 e   \left(\frac{2\sqrt{\rho}}{1+\rho}  \right)^n \int_{0}^{\frac{\pi}{2}} e^{-\frac{n^2\theta^2}{2}-\frac{\pi}{\theta} }\theta d\theta.
$$
By the variable change $\varphi = n^{\frac{1}{3}} \theta$, we have
$$
\int_{0}^{\frac{\pi}{2}} e^{-\frac{n^2\theta}{2}-\frac{\pi}{\theta} }\theta d\theta = n^{-\frac{2}{3}}   \int_{0}^{\frac{\pi}{2}n^{\frac{1}{3}}} e^{-n^{\frac{1}{3}} \left( \frac{\varphi^2}{2}+\frac{\pi}{\varphi}\right)} \varphi d\varphi
$$
and then
$$
\mu_n \sim 2 e n^{-\frac{2}{3}}    \left(\frac{2\sqrt{\rho}}{1+\rho}  \right)^n \int_{0}^{\frac{\pi^2}{4}n^{\frac{1}{6}}} e^{-n^{\frac{1}{3}} \left( \frac{\theta}{2}+\frac{\pi}{\sqrt{\theta}}\right) }  d\theta
$$
and by Laplace's method we eventually obtain
\end{comment}
$$
\mu_n \sim 2e  \left(\frac{2\sqrt{\rho}}{1+\rho}  \right)^n \frac{1}{n^{\frac{5}{6}}} \sqrt{\frac{8\pi^{\frac{5}{3}}}{3}}e^{-\pi^{\frac{2}{3}}n^\frac{1}{3}}
$$
when $n$ tends to infinity.

To numerically compute the value of $\P(\delta=j)$, we truncate the sum for $n$ ranging from 0 to infinity. We fix some $N>0$ and we increase the value of $N$ so as to obtain $\sum_{j=0}^N \P(\alpha=j)=1-\varepsilon$ for some $\varepsilon \ll 1.$

Figure~\ref{fig:Palpha} illustrates the probability distribution function of the number departures $\delta$ while the tagged job is in service for various system loads. In this figure, we have indicated the value of $\varepsilon$ used to truncate the infinite sum in the expression of $\P(\delta=j)$.

\begin{figure}[hbtp]
	\centering
   \includegraphics[scale=0.45, trim=0 0 0 0, clip] {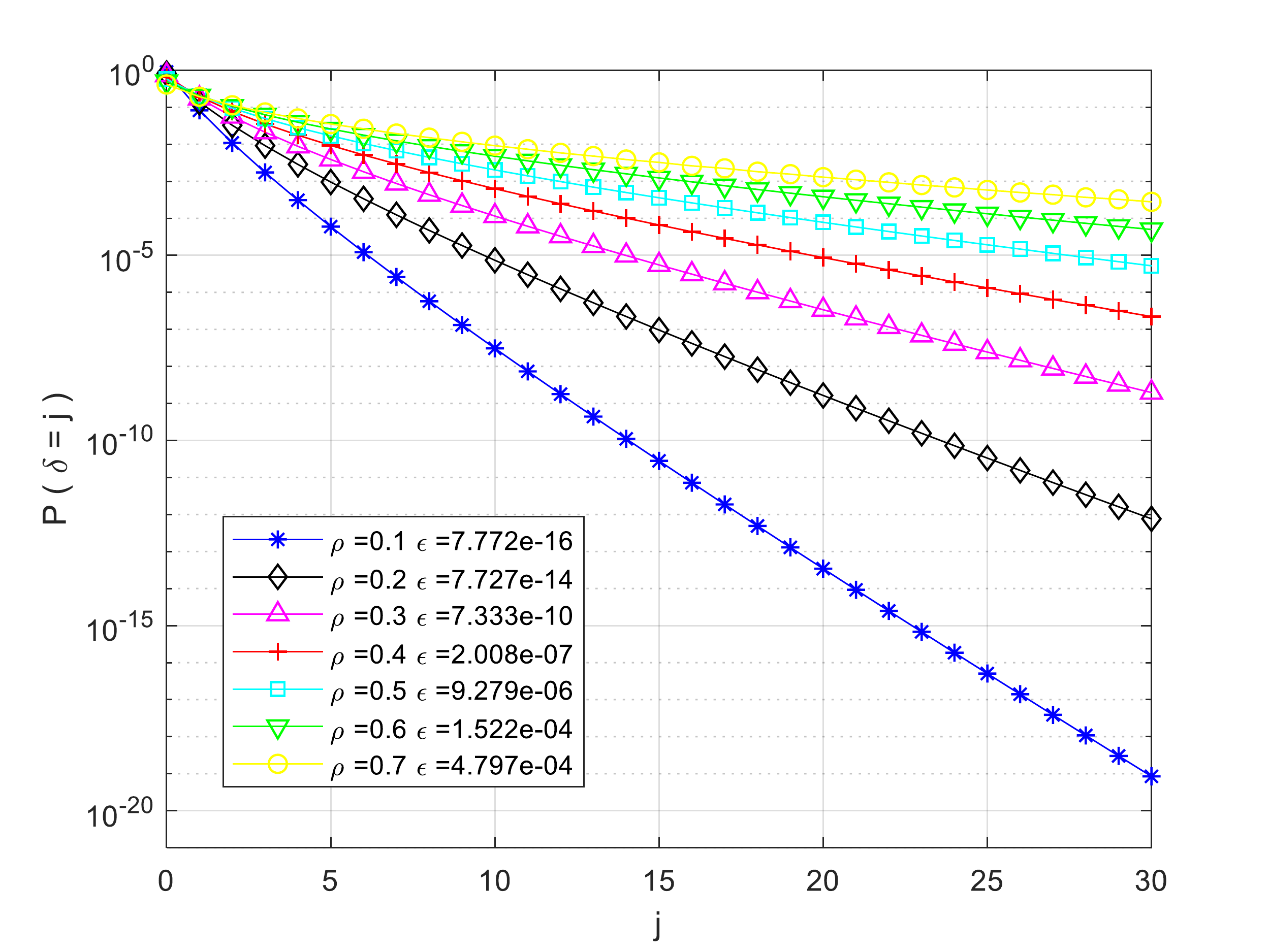}
    \caption{$\P(\delta=j)$ }
  \label{fig:Palpha}
\end{figure}

Figure~\ref{fig:Palpha_aprox} presents the asymptotic behavior given by Equation~\eqref{alphaj} (dashed line) of the distribution of the number of departures while the tagged customer is in service. The asymptotic formula is rapidly accurate for light load and the convergence is much slower for high loads. This phenomenon also occurs for the sojourn time distribution of a tagged customer by using the approximation established by Flatto~\cite{Flatto} for the Random Order Service queue and which is applicable to the PS queue modulo a factor $1/\rho$ (see~\cite{Borst2006} for details). 

\begin{figure}[hbtp]
	\centering
   \includegraphics[scale=0.45, trim=0 0 0 0, clip] {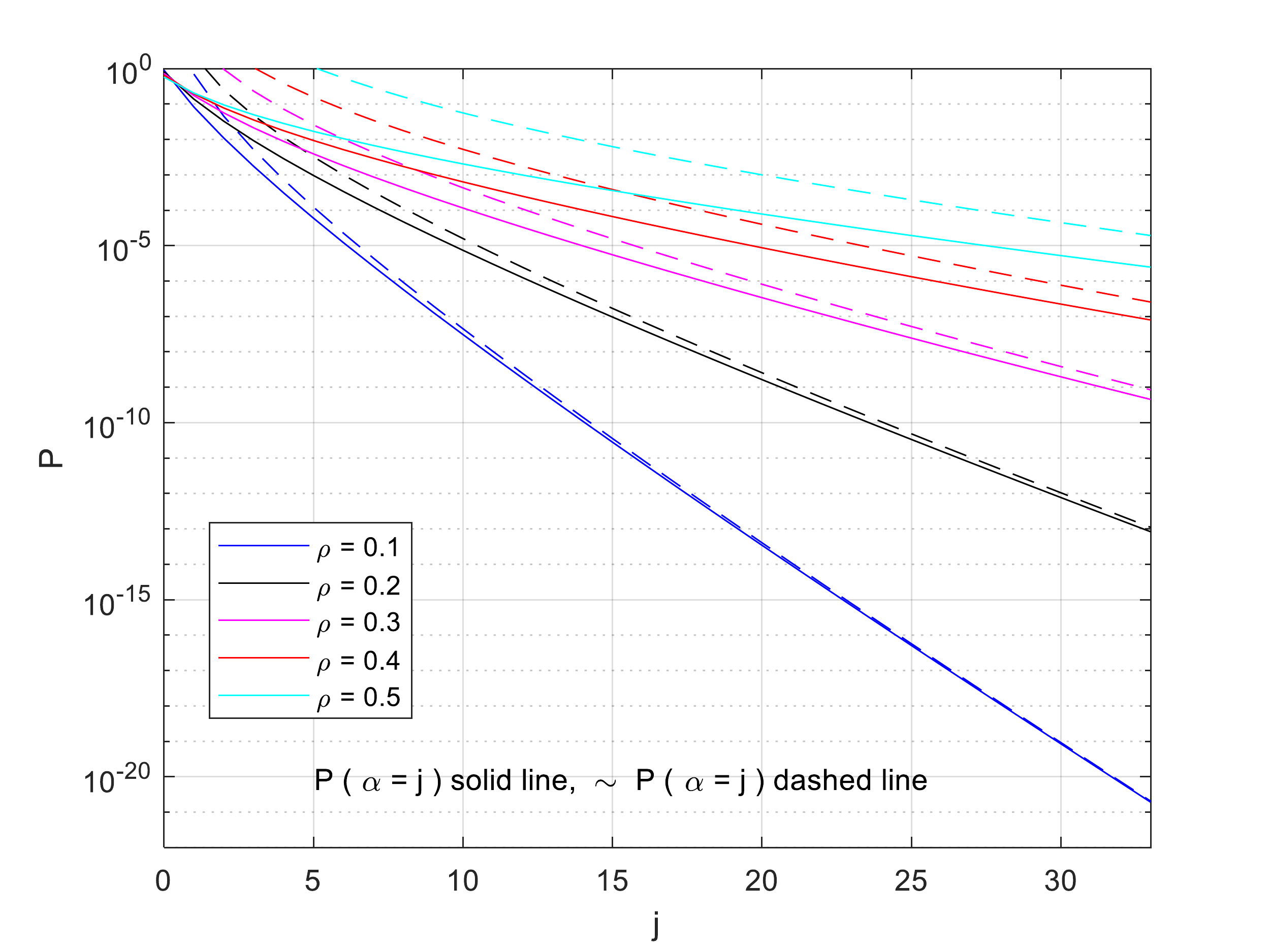}
      \caption{Asymptotic of the common distribution $\alpha$ and $\delta$.}
  \label{fig:Palpha_aprox}
\end{figure}

\subsection{Approximation}

As discussed in the Introduction, we now  consider the following approximation. Since the tagged customer has at each instant  a probability of completing service  equal to that of  any other customer present in the queue,  we may assume that the tagged customer is randomly served together with those customers in the residual busy period of the queue i.e., the period of activity  of the queue following the entrance of the tagged customer in the queue up to exhaustion. {See Figure~\ref{fig:busy_period} for an illustration of the residual busy periof and the associated random variables.}

\begin{figure}[hbtp]
	\centering
   \includegraphics[scale=0.45, trim=0 170 0 0, clip] {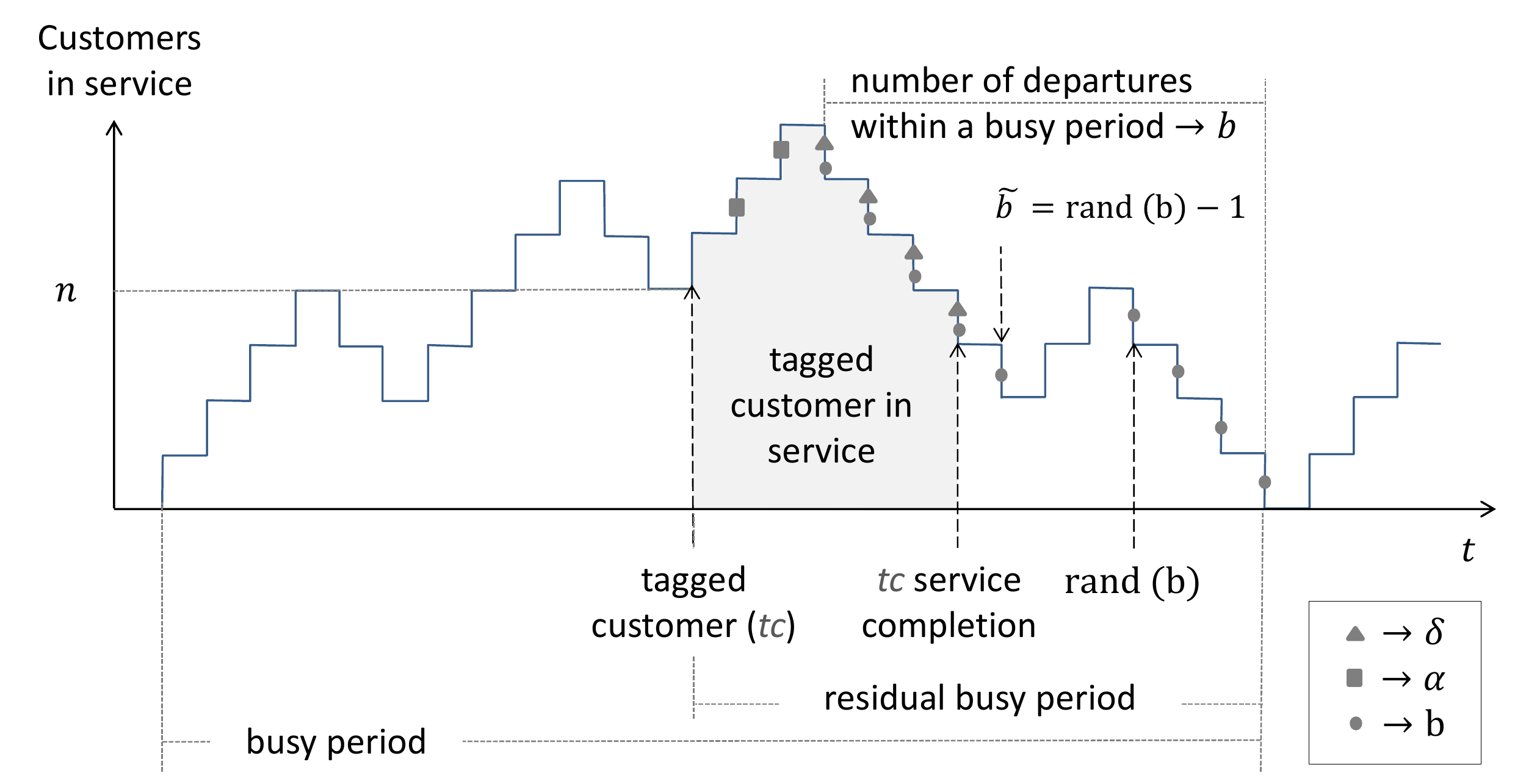}
      \caption{Residual busy period in the queue after the arrival of the tagged customer.}
  \label{fig:busy_period}
\end{figure}

Let $B$ be the number of customers served in the residual busy period and let $\beta$ denote the number of customers served in a regular busy period. It is clear that 
$$
\E(z^B~|~N_0=n) = \beta(z)^{n+1}
$$
since each present customer in the queue at the arrival of the tagged customer generates an independent busy period and then
$$
B(z) = \frac{(1-\rho)\beta(z)}{1-\rho\beta(z)}.
$$

From~\cite[p.~218]{Klein0}, 
$$
\beta(z) = \frac{1+\rho}{2\rho}\left(1-\sqrt{1-\frac{4\rho z}{(1+\rho)^2}}  \right)
$$
and then
$$
B(z) = \frac{(1-\rho)(1+\rho)}{2 \rho^2(z-1)}\left(1-\frac{2\rho z}{1+\rho}-\sqrt{1-\frac{4\rho z}{(1+\rho)^2}}  \right).
$$
The point $z=1$ is obviously a removable singularity for the function $B(z)$ and this function has an algebraic singularity at point $z= \frac{(1+\rho)^2}{4\rho}$. 

By using the expansion
$$
(1-x)^{\frac{1}{2}} = 1-\sum_{k\geq 1} \binom{2k-2}{k-1}\frac{x^k}{2^{2k-1}k},
$$
we obtain
$$
B(z) =  \frac{(1-\rho)(1+\rho)}{ \rho^2 (z-1)}\left(\sum_{k\geq 1} \binom{2k-2}{k-1}\frac{\rho^k z^k}{k(1+\rho)^{2k}}-\frac{\rho z}{1+\rho}  \right) .
$$
Since the term in parenthesis is null for $z=1$, we have
$$
B(z) =  \frac{(1-\rho)(1+\rho)}{ \rho^2}\left(\sum_{k\geq 1} \binom{2k-2}{k-1}\frac{\rho^k }{k(1+\rho)^{2k}}\frac{z^k-z}{z-1}  \right) 
$$
and then
$$
B(z) = \frac{(1-\rho)(1+\rho)}{ \rho^2} \sum_{\ell =0}^\infty z^\ell \sum_{k=\ell+1}^\infty  \binom{2k-2}{k-1}\frac{\rho^k }{k(1+\rho)^{2k}},
$$
where we have used the fact that $\frac{z^k-z}{z-1}= \sum_{\ell = 1}^{k-1} z^\ell$. It follows that the distribution of the number $b$ of customers served in the residual busy period of the tagged customer is defined by: for $\ell \geq 0$
$$
\P(b=\ell) =  \frac{1-\rho}{\rho(1+\rho)} \sum_{k=\ell}^\infty  \binom{2k}{k}\frac{\rho^k }{(k+1)(1+\rho)^{2k}}.
$$

{As shown in Figure~\ref{fig:busy_period},} if we pick up a random customer among those customers served in the residual busy period, then the number of customers served before this customer has distribution $\tilde{b}$ given by
$$
\P(\tilde{b}=j)= \sum_{k=j+1}^\infty \frac{1}{k}\P(b= k )
$$
and generating function
$$
\E\left(z^{\widetilde{b}}\right)=\int_0^1 \frac{B(t)-B(z t )}{t(1-z)} dt.
$$

 For $z$ in the neighborhood of $z_0=\frac{(1+\rho)^2}{4\rho}$, we have
$$
B(z) \sim \frac{1+ \rho}{\rho} -\frac{2(1+\rho)}{\rho(1-\rho)} \sqrt{1-\frac{z}{z_0}}
$$
and by Darboux method~\cite{Flajolet}, we obtain
\begin{equation}
    \label{basymp}
\P({b}=j) \sim \frac{1+\rho}{\rho(1-\rho)\sqrt{\pi}}\frac{1}{j^{\frac{3}{2}}} \left(\frac{2\sqrt{\rho}}{1+\rho}  \right)^{2j}.
\end{equation}
It follows that
\begin{eqnarray*}
\P(\tilde{b}=j)& \sim & \frac{1+\rho}{\rho(1-\rho)\sqrt{\pi}} \sum_{k=j+1}^\infty \frac{1}{k^{\frac{5}{2}}} \left(\frac{1}{z_0}  \right)^k \\
&\sim &  \frac{1+\rho}{\rho(1-\rho)\sqrt{\pi}} \frac{1}{z_0-1} \frac{1}{j^{\frac{5}{2}}} \left(\frac{1}{z_0}  \right)^j  =  \frac{4(1+\rho)}{(1-\rho)^3\sqrt{\pi}}  \frac{1}{j^{\frac{5}{2}}} \left(\frac{1}{z_0}  \right)^j
\end{eqnarray*}

{The probability distribution function of the number of customers served before the one that is randomly picked up in the residual busy period of the tagged customer is shown in Figure~\ref{fig:Pbtilde}. The asymptotic behavior given by Equation~\eqref{basymp}  is illustrated in dashed lines. }

\begin{figure}[hbtp]
	\centering
   \includegraphics[scale=0.45, trim=0 0 0 0, clip] {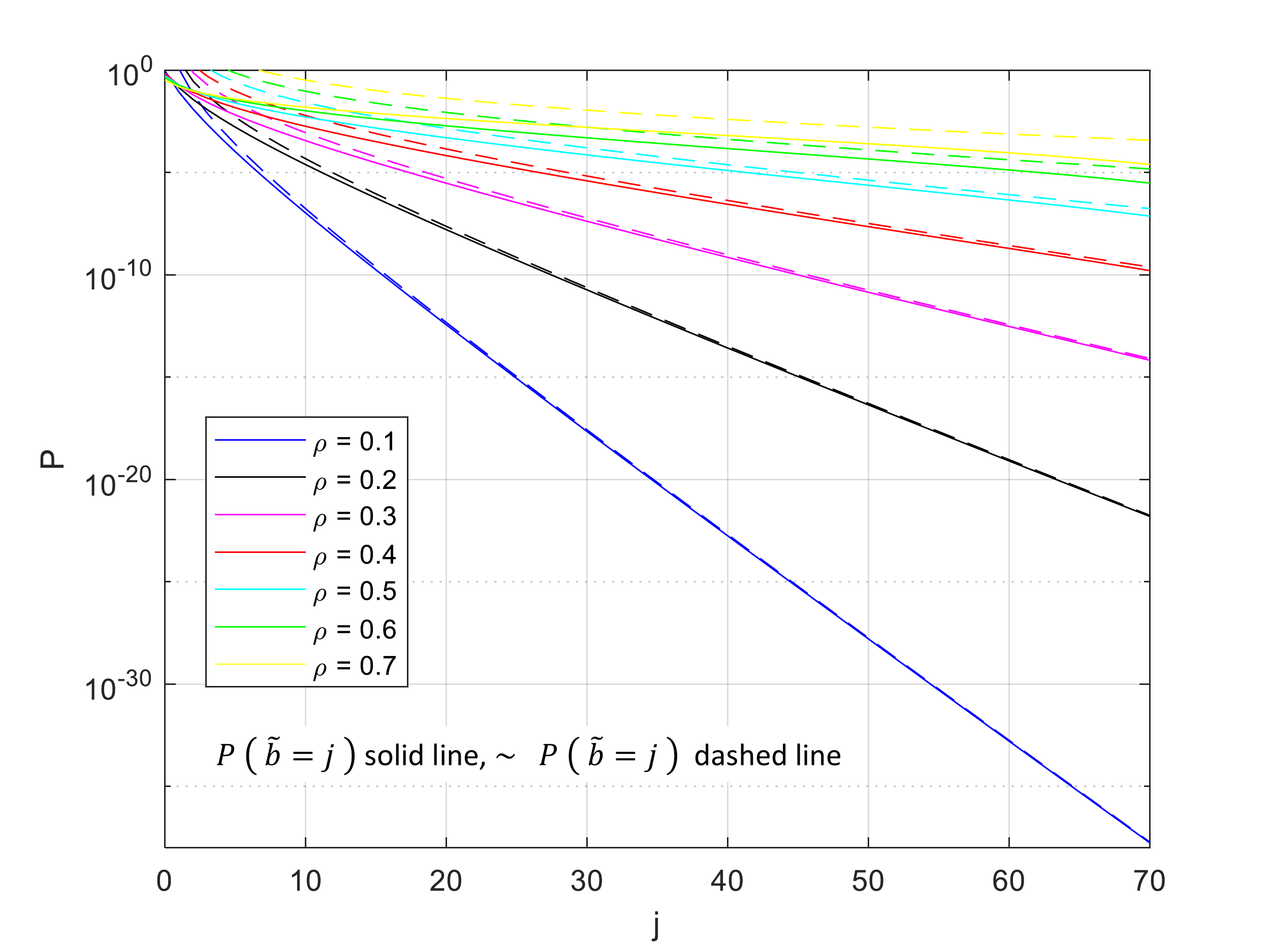}
    \caption{Probability distribution function of $\tilde{b}$. }
  \label{fig:Pbtilde}
\end{figure}

Figure~\ref{fig:Pdelta_btilde} represents the distributions of $\delta$ (the number of departures before service completion of the tagged job) and $\widetilde{b}$ (the number of departures before a customer randomly picked up in the residual busy period).  We can observe that both distribution are reasonably close, especially for light loads. The approximation is less accurate for high loads.

\begin{figure}[hbtp]
	\centering
   \includegraphics[scale=0.45, trim=0 0 0 0, clip] {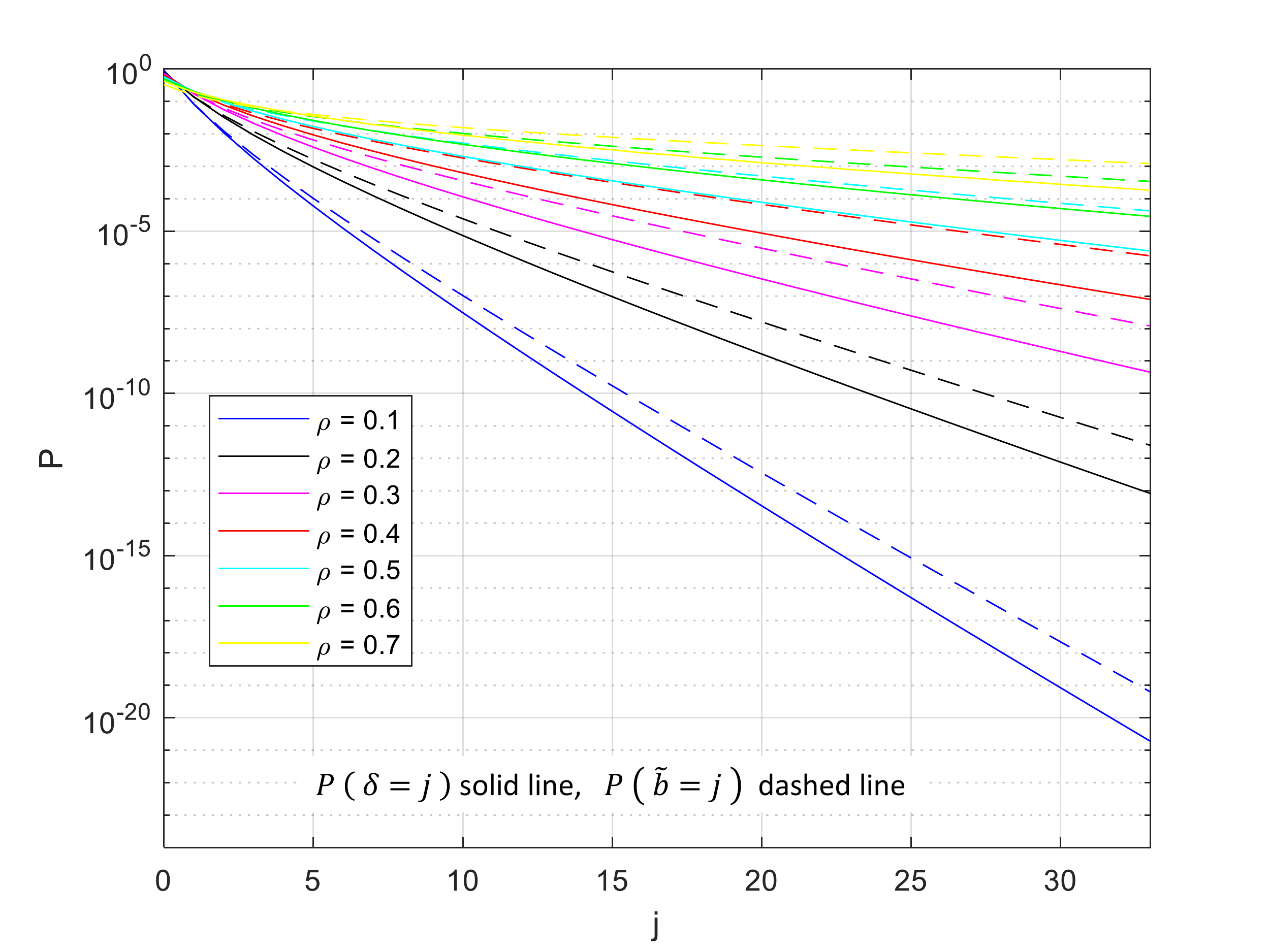}
      \caption{Probability distribution function of $\delta$ and $\tilde{b}$. }
  \label{fig:Pdelta_btilde}
\end{figure}

It is worth noting that $\delta$ and $\widetilde{b}$ have the same decay rate in the sense that
$$
\lim_{j\to \infty}\frac{1}{j}\log \P(\delta=j) = \lim_{j\to \infty}\frac{1}{j}\log \P(\widetilde{b}=j) =\log \frac{4\rho}{(1+\rho)^2}.
$$
When the tagged customer stays for a very long time in the queue, then the residual busy period is also very long and on the same exponential scale as the regular busy period. However, the prefactors are very different and
$$
\lim_{j\to \infty}\frac{\P(\delta=j)}{\P(\widetilde{b}=j)} = 0.
$$
The tagged customers stays much less time in the queue than the time provided by the approximation. In spite of this, we can conclude that the approximation is reasonable for moderate values of $j$ as shown in Figure~\ref{fig:Pdelta_btilde}.

\section{Conclusion}
\label{conclusion}

 We have computed in this paper the number of departures  seen by a tagged customer when entering an $M/M/1$-PS queue in the stationary regime. By using the underlying orthogonal structure of this queue, it is possible to obtain an explicit expression for the distribution of this random variable. We have also derived the asymptotic behavior of this distribution. Subsequently, we have  developed  an approximation for this quantity, which involves random variables with simpler distributions. Numerical experiments show that this approximation is reasonably accurate for moderate values  and asymptotically yields an upper bound for the number of departures seen by the tagged customer in the $M/M/1$-PS queue. The same kind of approximation may be used in more complex systems, such as the $M^{[X]}/M/1$-PS queue.

\bibliographystyle{plain}
\bibliography{adhoc}
\end{document}